\newcommand{\lp}[1]{\ifthenelse{\equal{#1}{0}}{(}{}\ifthenelse{\equal{#1}{1}}{\bigl(}{}\ifthenelse{\equal{#1}{2}}{\Bigl(}{}\ifthenelse{\equal{#1}{3}}{\biggl(}{}\ifthenelse{\equal{#1}{4}}{\Biggl(}{}\ifthenelse{\equal{#1}{5}}{\Biggl(}{}}
\newcommand{\rp}[1]{\ifthenelse{\equal{#1}{0}}{)}{}\ifthenelse{\equal{#1}{1}}{\bigr)}{}\ifthenelse{\equal{#1}{2}}{\Bigr)}{}\ifthenelse{\equal{#1}{3}}{\biggr)}{}\ifthenelse{\equal{#1}{4}}{\Biggr)}{}\ifthenelse{\equal{#1}{5}}{\Biggr)}{}}
\newcommand{\lbc}[1]{\ifthenelse{\equal{#1}{0}}{\{}{}\ifthenelse{\equal{#1}{1}}{\bigl\{}{}\ifthenelse{\equal{#1}{2}}{\Bigl\{}{}\ifthenelse{\equal{#1}{3}}{\biggl\{}{}\ifthenelse{\equal{#1}{4}}{\Biggl\{}{}\ifthenelse{\equal{#1}{5}}{\Biggl\{}{}}
\newcommand{\rbc}[1]{\ifthenelse{\equal{#1}{0}}{\}}{}\ifthenelse{\equal{#1}{1}}{\bigr\}}{}\ifthenelse{\equal{#1}{2}}{\Bigr\}}{}\ifthenelse{\equal{#1}{3}}{\biggr\}}{}\ifthenelse{\equal{#1}{4}}{\Biggr\}}{}\ifthenelse{\equal{#1}{5}}{\Biggr\}}{}}
\newcommand{\lba}[1]{\ifthenelse{\equal{#1}{0}}{\langle}{}\ifthenelse{\equal{#1}{1}}{\bigl\langle}{}\ifthenelse{\equal{#1}{2}}{\Bigl\langle}{}\ifthenelse{\equal{#1}{3}}{\biggl\langle}{}\ifthenelse{\equal{#1}{4}}{\Biggl\langle}{}\ifthenelse{\equal{#1}{5}}{\Biggl\langle}{}}
\newcommand{\rba}[1]{\ifthenelse{\equal{#1}{0}}{\rangle}{}\ifthenelse{\equal{#1}{1}}{\bigr\rangle}{}\ifthenelse{\equal{#1}{2}}{\Bigr\rangle}{}\ifthenelse{\equal{#1}{3}}{\biggr\rangle}{}\ifthenelse{\equal{#1}{4}}{\Biggr\rangle}{}\ifthenelse{\equal{#1}{5}}{\Biggr\rangle}{}}
\newcommand{\ve}[1]{\ifthenelse{\equal{#1}{0}}{|}{}\ifthenelse{\equal{#1}{1}}{\big|}{}\ifthenelse{\equal{#1}{2}}{\Big|}{}\ifthenelse{\equal{#1}{3}}{\bigg|}{}\ifthenelse{\equal{#1}{4}}{\Bigg|}{}\ifthenelse{\equal{#1}{5}}{\Bigg|}{}}
\newcommand{\lb}[1]{\ifthenelse{\equal{#1}{0}}{[}{}\ifthenelse{\equal{#1}{1}}{\bigl[}{}\ifthenelse{\equal{#1}{2}}{\Bigl[}{}\ifthenelse{\equal{#1}{3}}{\biggl[}{}\ifthenelse{\equal{#1}{4}}{\Biggl[}{}\ifthenelse{\equal{#1}{5}}{\Biggl[}{}}
\newcommand{\rb}[1]{\ifthenelse{\equal{#1}{0}}{]}{}\ifthenelse{\equal{#1}{1}}{\bigr]}{}\ifthenelse{\equal{#1}{2}}{\Bigr]}{}\ifthenelse{\equal{#1}{3}}{\biggr]}{}\ifthenelse{\equal{#1}{4}}{\Biggr]}{}\ifthenelse{\equal{#1}{5}}{\Biggr]}{}}
\newcommand{\srp}[3]{\ifthenelse{\equal{#1}{0}}{)^{#2}_{#3}}{}\ifthenelse{\equal{#1}{1}}{\bigr)^{#2}_{#3}}{}\ifthenelse{\equal{#1}{2}}{\Bigr)^{#2}_{#3}}{}
\ifthenelse{\equal{#1}{3}}{\biggr)^{#2}_{#3}}{}\ifthenelse{\equal{#1}{4}}{\Biggr)^{#2}_{#3}}{}\ifthenelse{\equal{#1}{5}}{\Biggr)^{#2}_{#3}}{}}
\newcommand{\srb}[3]{\ifthenelse{\equal{#1}{0}}{]^{#2}_{#3}}{}\ifthenelse{\equal{#1}{1}}{\bigr]^{#2}_{#3}}{}\ifthenelse{\equal{#1}{2}}{\Bigr]^{#2}_{#3}}{}
\ifthenelse{\equal{#1}{3}}{\biggr]^{#2}_{#3}}{}\ifthenelse{\equal{#1}{4}}{\Biggr]^{#2}_{#3}}{}\ifthenelse{\equal{#1}{5}}{\Biggr]^{#2}_{#3}}{}}
\newcommand{\srbc}[3]{\ifthenelse{\equal{#1}{0}}{{\}}^{#2}_{#3}}{}\ifthenelse{\equal{#1}{1}}{\bigr{\}}^{#2}_{#3}}{}\ifthenelse{\equal{#1}{2}}{\Bigr{\}}^{#2}_{#3}}{}
\ifthenelse{\equal{#1}{3}}{\biggr{\}}^{#2}_{#3}}{}\ifthenelse{\equal{#1}{4}}{\Biggr{\}}^{#2}_{#3}}{}\ifthenelse{\equal{#1}{5}}{\Biggr{\}}^{#2}_{#3}}{}}
\newcommand{\srba}[3]{\ifthenelse{\equal{#1}{0}}{\rangle^{#2}_{#3}}{}\ifthenelse{\equal{#1}{1}}{\bigr\rangle^{#2}_{#3}}{}\ifthenelse{\equal{#1}{2}}{\Bigr\rangle^{#2}_{#3}}{}
\ifthenelse{\equal{#1}{3}}{\biggr\rangle^{#2}_{#3}}{}\ifthenelse{\equal{#1}{4}}{\Biggr\rangle^{#2}_{#3}}{}\ifthenelse{\equal{#1}{5}}{\Biggr\rangle^{#2}_{#3}}{}}
\newcommand{\sve}[3]{\ifthenelse{\equal{#1}{0}}{|^{#2}_{#3}}{}\ifthenelse{\equal{#1}{1}}{\bigr|^{#2}_{#3}}{}\ifthenelse{\equal{#1}{2}}{\Bigr|^{#2}_{#3}}{}
\ifthenelse{\equal{#1}{3}}{\biggr|^{#2}_{#3}}{}\ifthenelse{\equal{#1}{4}}{\Biggr|^{#2}_{#3}}{}\ifthenelse{\equal{#1}{5}}{\Biggr|^{#2}_{#3}}{}}
\newcommand{\re}{\mathcal{R}e}
\newcommand{\disqueb}[1]{\ensuremath{D_{#1}}}
\newcommand{\demidisqueb}[1]{\ensuremath{D_{#1}^+}}
\newcommand{\disque}[1]{{\disqueb{#1}}}
\newcommand{\demidisque}[1]{{\demidisqueb{#1}}}
\newcommand{\AMSclass}[1]{{\textbf{A.M.S. subject classification:} #1}}
\newcommand{\assign}{:=}
\newcommand{\keywords}[1]{{\textbf{Keywords:} #1}}
\newcommand{\nonesep}{}
\newcommand{\tmdummy}{$\mbox{}$}
\newcommand{\tmmathbf}[1]{\ensuremath{\boldsymbol{#1}}}
\newcommand{\tmop}[1]{\ensuremath{\operatorname{#1}}}
\newcommand{\tmscript}[1]{\text{\scriptsize{$#1$}}}
\newcommand{\tmtextbf}[1]{{\bfseries{#1}}}
\newcommand{\tmtextit}[1]{{\itshape{#1}}}
\newcommand{\tmtextup}[1]{{\upshape{#1}}}
\newenvironment{itemizedot}{\begin{itemize} }{\end{itemize}}
\newenvironment{proof}{\noindent\textbf{Proof\ }}{\hspace*{\fill}$\Box$\medskip}
\numberwithin{equation}{section}  
\numberwithin{figure}{section}  
\newtheorem{theorem}{Theorem}[section]
\newtheorem{lemma}[theorem]{Lemma}
\newtheorem{proposition}[theorem]{Proposition}
\newtheorem{remark}[theorem]{Remark}
\begin{document}

\title{A deformation formula for the heat kernel\thanks{This paper has been written using the GNU TEXMACS scientific text editor.}
\thanks{\keywords{heat kernel,
quantum mechanics, Wiener integral, Feynman integral, Euler-Lagrange
equations, Hamiltonian systems, non-autonomous}; \AMSclass{35K08, 81Q30, 70H03,
70H05}}}\author{Thierry Harg\'e}\date{January 23, 2013}\maketitle

\begin{abstract}
  Let $P_0$ be a time-dependent partial differential operator acting on
  functions defined on $\mathbbm{C}^{\nu}$, quadratic with respect to
  $\partial_{x_1}, \ldots, \partial_{x_{\nu}}, x_1, \ldots, x_{\nu}$. Let $c$
  be a matrix-valued regular potential. Under suitable conditions, we give an
  ``explicit'' expression of ``the'' heat kernel associated to $P_0 + c$ for
  small $|t|$, $t \in \mathbbm{C}$, $\re t \geqslant 0$, $x, y \in
  \mathbbm{C}^{\nu}$.
\end{abstract}

\section{\label{ella}Introduction}

Let $\nu \geqslant 1$. For $j, k = 1, \ldots, \nu$, let $A_{j, k}$, $B_{j,
k}$, $C_{j, k}$ be complex functions analytic in a neighbourhood $U$ of the
origin. Let
\[ P_0 \assign \sum_{j, k = 1}^{\nu} A_{j, k} (t) \lp{2} \partial_{x_j} +
   \sum_{l = 1}^{\nu} B_{j, l} (t) x_l \rp{2} \lp{2} \partial_{x_k} + \sum_{l'
   = 1}^{\nu} B_{k, l'} (t) x_{l'} \rp{2} - \]
\begin{equation}
  \label{ella2} \sum_{j, k = 1}^{\nu} C_{j, k} (t) x_j x_k .
\end{equation}
As well-known, under suitable assumptions on the matrix $A$, the equation
\begin{equation}
  \label{ella4} \left\{ \begin{array}{l}
    \partial_t u = P_0 u\\
    \\
    u_{} |_{t = 0^+} =_{} \delta_{x = y}
  \end{array} \right.
\end{equation}
admits an explicit solution
\begin{equation}
  \label{ella5} p_t^0 (x, y) \assign \frac{1}{(4 \pi \Delta t)^{\nu / 2}} e^{-
  \frac{1}{t} \Phi_0 (x, y, t)} .
\end{equation}
Here $\Phi_0$ denotes a polynomial of total degree 2 with respect to $x, y$,
whose coefficients are analytic near 0; $\Delta$ is the determinant of the
matrix $\lp{1} A_{j, k} (0) \srp{1}{}{1 \leqslant j, k \leqslant \nu}$. Let
$c$ be a regular square matrix-valued function defined on $U \times
\mathbbm{C}^{\nu}$. Let $p_t (x, y)$ be a solution of
\begin{equation}
  \label{ella6} \left\{ \begin{array}{l}
    \partial_t u = \lp{1} P_0 + c (t, x) \rp{1} u\\
    \\
    u_{} |_{t = 0^+} =_{} \delta_{x = y}
  \end{array} \right.
\end{equation}
and let $p_t^{\tmop{conj}} (x, y)$ be defined by
\[ p_t (x, y) = p_t^0 (x, y) p_t^{\tmop{conj}} (x, y) . \]
Roughly speaking, if the autonomous case is considered for the sake of
simplicity, $p_t^{\tmop{conj}}$ may be related to two transition amplitudes:
\[ p_t^{\tmop{conj}} (x, y) = \frac{\left\langle y| \exp \lp{1} t (P_0 + c)
   \rp{1} |x \right\rangle}{\left\langle y| \exp (t P_0) |x \right\rangle} .
\]
First let us assume that $P_0 = \partial_{x_1}^2 + \cdots +
\partial_{x_{\nu}}^2$ (free case) or, more generally, $P_0 = \partial_{x_1}^2
+ \cdots + \partial_{x_{\nu}}^2 - \lambda (x_1^2 + \cdots + x_{\nu}^2)$,
$\lambda \in \mathbbm{R}$ (harmonic case). Under strong assumptions on the
scalar potential $c$, $p^{\tmop{conj}}$ is Borel summable with respect to $t$
[Ha4]. The same work is done in [Ha5] in the free case, but with a vector
potential instead of a scalar one. A so-called deformation formula
(respectively its vector potential version) which gives a convenient
representation of $p_t^{\tmop{conj}}$ is used.

The main aim of this paper is to explore the limits of this formula by
considering the non-autonomous case. This explains the choice of the operator
$P_0$ in (\ref{ella2}). Our deformation formula is given in Theorem
\ref{lola20}. We do not attempt to give a uniqueness statement for the
definition of the heat kernel and we refer to [Ha7] for precise statements
about this question.

This formula is related to Wiener and Feynman integrals [It, A-H]. As in these
references, we write the potential as the Fourier transform of a Borel
measure. See [Ha4] for more details about the relationship between this
formula and Wiener or Feynman integrals.

The shape of the formula can be explained by using a heuristic Wiener
representation of $p_t (x, y)$ and Wick's theorem (see [Ha4, Appendix]).
However we prove the deformation formula by working directly on the equations
satisfied by $p_t^{\tmop{conj}} (x, y)$ without attempting to obtain an
expression of $p_t (x, y)$. This formula uses a so-called deformation matrix.
This matrix, in the autonomous case, is considered in [Ge-Ya, On].

By the heuristic method, it is easy to see that the construction of this
matrix involves a ``propagator'' (defined as in quantum field theory). Here we
first give another definition of this object and we prove a posteriori that it
verifies the propagator equation (see section \ref{emma15.5}).

The shape of the operator $P_0$ implies that $p_t^0 (x, y)$ can be written
explicitly using the solution of a classical Hamiltonian system. The
deformation matrix also depends on this solution, hence, by the deformation
formula, the expression of $p^{\tmop{conj}}_t (x, y)$ only involves objects
related to this Hamiltonian system.

We assume in Theorem \ref{lola20} that the functions $A$, $B$ and $C$ satisfy
a reality assumption (see (\ref{lola22})), implying that $P_0 |_{t \in
i\mathbbm{R}}$ is symmetric with respect to the $L^2$-inner product. This
assumption is natural if the Schr\"odinger kernel is considered.

One can certainly establish a deformation formula in the case of a vector
potential perturbation of $P_0$ (instead of a scalar potential one) and give a
Borel summability statement for the small time expansion of $p_t^{\tmop{conj}}
(x, y)$.

\section{\label{lola}Notation and main results}

For $z = |z|e^{i \theta} \in \mathbbm{C}$, $\theta \in [- \pi / 2, \pi / 2]$,
let $z^{1 / 2} \assign |z|^{1 / 2} e^{i \theta / 2}$. For $T > 0$, let
$\disque{T} \assign \{z \in \mathbbm{C} | |z| < T\}$, $\demidisque{T} \assign
\{z \in \disque{T} | \re (z) > 0\}$ and $\bar{D}^+_T \assign \{z \in
\disque{T} | \re (z) \geqslant 0\}$. For $\lambda, \mu \in \mathbbm{C}^{\nu}$,
we denote $\lambda \cdot \mu \assign \lambda_1 \mu_1 + \cdots + \lambda_{\nu}
\mu_{\nu}$, $\lambda^2 \assign \lambda \cdot \lambda$, $\bar{\lambda} \assign
( \bar{\lambda}_1, \ldots, \bar{\lambda}_{\nu})$, $| \lambda | \assign
(\lambda \cdot \bar{\lambda})^{1 / 2}$ and we extend the two first notations
to operators such as $\partial_x = (\partial_{x_1}, \ldots,
\partial_{x_{\nu}})$. Let $A = (A_{j, k})_{1 \leqslant j, k \leqslant \nu}$
with $A_{j, k} \in \mathbbm{C}$. If $x, y \in \mathbbm{C}^{\nu}$, $\sum_{j, k}
A_{j, k} x_j y_k$ is denoted by $x \cdot A y$ or $A \cdot x \otimes y$ if $A$
is symmetric. We set $\text{}^{\mathfrak{t}} A$ for the transpose of the
matrix $A$ and $|A|_{\infty} \assign \sup_{|x| = 1} |A x|$. In what follows,
we shall consider a potential function defined on $\disque{T} \times
\mathbbm{C}^{\nu}$ with values in a finite dimensional space of square
matrices, say $\mathcal{M}$. We always use on $\mathcal{M}$ a norm $| \cdot |$
such that $|AB| \leqslant |A\|B|$ for $A, B \in \mathcal{M}$ and $|
\mathbbm{1} | = 1$ ($\mathbbm{1}$ denotes the unitary matrix). Let $\Omega$ be
an open domain in $\mathbbm{C}^m$ and let $F$ be a complex finite dimensional
space. We denote by $\mathcal{A} (\Omega)$ the space of $F$-valued analytic
functions on $\Omega$, if there is no ambiguity on $F$. Let $T > 0$.
$\mathcal{C}^{\infty} \lp{1} \bar{D}^+_T, \text{$\mathcal{A}(\mathbbm{C}^{2
\nu})$} \rp{1}$ denotes the space of smooth functions defined on $\bar{D}^+_T$
with values in $\mathcal{A}(\mathbbm{C}^{2 \nu}$). We denote by
$\mathcal{C}^{\infty}_{b, 1} (i] - T, T [\times \mathbbm{R}^m)$ the space of
smooth $\mathcal{M}$-valued functions defined on $i] - T, T [\times
\mathbbm{R}^m$ such that
\[ f \in \mathcal{C}^{\infty}_{b, 1} (i] - T, T [\times \mathbbm{R}^m)
   \Leftrightarrow \]
\[ \forall (\alpha, \beta) \in \mathbbm{N} \times \mathbbm{N}^m, \exists C >
   0, \forall (t, x) \in i] - T, T [\times \mathbbm{R}^m, |
   \partial_t^{\alpha} \partial^{\beta}_x f (t, x) | \leqslant C (1 +
   |x|)^{\alpha} . \]
We always consider these spaces with their standard Frechet structure (the
semi-norms are eventually indexed by compact sets or differentiation order).

Let $\mathfrak{B}$ be the collection of all Borel sets on $\mathbbm{R}^m$. An
$F$-valued measure $\mu$ on $\mathbbm{R}^m$ is an $F$-valued function on
$\mathfrak{B}$ satisfying the classical countable additivity property \ [Ru].
Let $| \cdot |$ be a norm on $F$. We denote by $| \mu |$ the positive measure
defined \ by
\[ | \mu | (E) = \sup \sum_{j = 1}^{\infty} | \mu (E_j) | (E \in
   \mathfrak{B}), \]
the supremum being taken over all partition $\{E_j \}$ of $E$. In particular
$| \mu | (\mathbbm{R}^m) < \infty$. Note that $d \mu = h d| \mu |$ where $h$
is some $F$-valued function satisfying $|h| = 1$ $| \mu |$-a.e. If $f$ is an
$F$-valued measurable function on $\mathbbm{R}^m$ and $\lambda$ is a positive
measure on $\mathbbm{R}^m$ such that $\int_{\mathbbm{R}^m} |f| d \lambda <
\infty$, one may define an $F$-valued measure $\mu$, by setting $d \mu = f d
\lambda$. Then $d| \mu | = |f |d \lambda$.

Let $A$, $B$ and $C$ be some $\nu \times \nu$ complex matrix-valued analytic
functions defined on a neighbourhood of 0 in $\mathbbm{C}$. Let us assume that
the matrices $A$, $C$ are symmetric and that the matrix $A (0)$ is real
positive definite. The operator $P_0$ defined in (\ref{ella2}) can be
rewritten as
\begin{equation}
  \label{lola4} P_0 = A (t) \cdot (\partial_x + B (t) x)^2 - C (t) \cdot x
  \otimes x
\end{equation}
where
\[ A (t) \cdot (\partial_x + B (t) x)^2 \assign \sum_{j, k = 1}^{\nu} A_{j, k}
   (t) \lp{2} \partial_{x_j} + \sum_{l = 1}^{\nu} B_{j, l} (t) x_l \rp{2}
   \lp{2} \partial_{x_k} + \sum_{l' = 1}^{\nu} B_{j, l'} (t) x_{l'} \rp{2} .
\]
The fact that (\ref{ella4}) admits a solution as mentioned in the introduction
will be recovered later. Our main result gives a formula for the solution of a
perturbation of (\ref{ella4}). We need some classical objects associated to
the operator defined in (\ref{lola4}). For $t \in \mathbbm{C}$, $|t|$ small,
let $L$ be the following Lagrangian acting on $\mathbbm{C}^{\nu}$-valued
functions
\begin{equation}
  \label{lola10} L \assign \frac{1}{4} \dot{q} \cdot A^{- 1} \dot{q} + \dot{q}
  \cdot B q + q \cdot C q.
\end{equation}
The Euler-Lagrange equations associated to (\ref{lola10}) can be written
\begin{equation}
  \label{lola12} \ddot{q} = E \dot{q} + F q
\end{equation}
where
\begin{equation}
  \label{lola13} E \assign \dot{A} A^{- 1} + 2 A ( \text{}^{\mathfrak{t}} B -
  B) \text{ , \ } F \assign 4 A C - 2 A \dot{B} .
\end{equation}
Let $x, y \in \mathbbm{C}^{\nu}$. Let $t$ be a small positive number. We
denote by $q^{\natural}_t$ the solution of (\ref{lola12}) with the conditions
$q^{\natural}_t (0) = y$, $q^{\natural}_t (t) = x$ if it is uniquely defined.
Notice that $q^{\natural}_t$ can be expressed by
\begin{equation}
  \label{lola13.5} q^{\natural}_t = q^{\flat}_t x + q^{\sharp}_t y
\end{equation}
where the matrices-valued functions $q^{\flat}_t$, $q^{\sharp}_t$ are
respectively solutions of
\begin{equation}
  \label{lola14} \ddot{q} = E \dot{q} + F q \text{ \ \ \ (matrix-valued
  equation)}
\end{equation}
with the conditions
\[ q^{\flat}_t (0) = 0 \text{ , \ } q^{\flat}_t (t) = \mathbbm{1}, \]
respectively
\[ q^{\sharp}_t (0) = \mathbbm{1} \text{ , \ } \text{$\mathbbm{}$} \text{}
   \text{} \text{} \text{$q^{\sharp}_t (t) = 0$} . \]
Let $\tilde{q}^{\diamond}_t$ (with $\diamond = \natural, \flat, \sharp$) be
defined on $[0, 1]$ by the relation
\begin{equation}
  \label{lola14.5} q_t^{\diamond} (s) = \tilde{q}^{\diamond}_t \lp{1}
  \frac{s}{t} \rp{1} .
\end{equation}
Notice that $\tilde{q}^{\flat}_t$ is the solution of
\begin{equation}
  \label{lola15} \ddot{q} = t E (t s) \dot{q} + t^2 F (t s) q \text{ , \ } q
  (0) = 0 \text{ , \ } q (1) = \mathbbm{1}
\end{equation}
and that $\tilde{q}^{\sharp}_t$ is the solution of
\begin{equation}
  \label{lola15bis} \ddot{q} = t E (t s) \dot{q} + t^2 F (t s) q \text{ , \ }
  q (0) = \mathbbm{1} \text{ , \ } q (1) = 0.
\end{equation}
Actually (\ref{lola15}) (respectively (\ref{lola15bis})) admits a unique
solution for complex $t$ with small modulus, which provides the good
definition of $\tilde{q}^{\flat}_t (s)$ (respectively $\tilde{q}^{\sharp}_t
(s)$). Then, by (\ref{lola14.5}), one gets the existence and uniqueness of
$q^{\flat}_t (s), q^{\sharp}_t (s)$ for $s \in [0, t]$ and small positive $t$.
The following expressions play a central role in the statement of our main
result. For $s, s' \in [0, t]$, let $K_t (s, s')$ be the $\nu \times \nu$
matrix defined by
\begin{equation}
  \label{lola16} K_t (s, s') \assign \int_{s \vee s'}^t q^{\flat}_{\tau} (s) A
  (\tau) \text{}^{\mathfrak{t}} q^{\flat}_{\tau} (s') d \tau .
\end{equation}
For complex $t$ with small modulus and $s, s' \in [0, 1]$, we also denote
\begin{equation}
  \label{lola18} \tilde{K}_t (s, s') \assign \int_{s \vee s'}^1
  \tilde{q}^{\flat}_{t \tau} \lp{1} \frac{s}{\tau} \rp{1} A (t \tau)
  \text{}^{\mathfrak{t}} \tilde{q}^{\flat}_{t \tau} \lp{1} \frac{s'}{\tau}
  \rp{1} d \tau .
\end{equation}
If $t$ is real, positive and $s, s' \in [0, t]$, one gets
\[ K_t (s, s') = t \tilde{K}_t \lp{1} \frac{s}{t}, \frac{s'}{t} \rp{1} . \]
For $s = (s_1, \ldots, s_n) \in [0, 1]^n$, we define
\[ \tmmathbf{\tilde{K}}_t (s) \cdot \partial_z \otimes_n \partial_z \assign
   \sum_{j, k = 1}^n \partial_{z_j} \cdot \tilde{K}_t (s_j, s_k)
   \partial_{z_k} . \]
This differential operator acts on $\mathcal{A}(\mathbbm{C}^{\nu n})$.

We denote by $\bar{T}$ a positive number such that
(\ref{lola15})-(\ref{lola15bis}) admits a unique solution for $t \in
D_{\bar{T}}$ and the map $(t, s) \longmapsto q_t^{\diamond} (s)$ (with
$\diamond = \flat, \sharp$) is analytic near $D_{\bar{T}} \times [0, 1]$.

\begin{theorem}
  \label{lola20}Let $T_b > 0$. Let $f$ be a measurable function on
  $\disque{T_b} \times \mathbbm{R}^{\nu}$ with values in a complex finite
  dimensional space of square matrices, analytic with respect to the first
  variable. Let $\mu_{\ast}$ be a positive measure on $\mathbbm{R}^{\nu}$.
  Assume that for every $R > 0$
  \begin{equation}
    \label{lola21} \int_{\mathbbm{R}^{\nu}} e^{R| \xi |} \sup_{|t| < T_b} |f
    (t, \xi) |d \mu_{\ast} (\xi) < \infty .
  \end{equation}
  Let $\mu_t$ be the measure defined by $d \mu_t (\xi) = f (t, \xi) d
  \mu_{\ast} (\xi)$ and let
  \[ c (t, x) = \int_{\mathbbm{R}^{\nu}} \exp (ix \cdot \xi) d \mu_t (\xi) .
  \]
  Let $P_0$ be an operator as in (\ref{lola4}). Let us assume that each $g =
  A, i B, C$ satisfies:
  \begin{equation}
    \label{lola22} \text{The function } g|_{i\mathbbm{R}} \text{ is
    real-valued near } 0.
  \end{equation}
  Let $p^{\tmop{conj}}$ be defined by
  \begin{equation}
    \label{lola23} p^{\tmop{conj}} = \mathbbm{1} + \sum_{n \geqslant 1} v_n,
  \end{equation}
  where
  \[ v_n (t, x, y) \assign t^n \times \]
  \begin{equation}
    \label{lola24} \int_{0 < s_1 < \cdots < s_n < 1} \lb{2} e^{t
    \tmmathbf{\tilde{K}}_t (s) \cdot \partial_z \otimes_n \partial_z } c (s_n
    t, z_n) \cdots c (s_1 t, z_1) \rb{2} \ve{2}_{\tmscript{\begin{array}{l}
      z_1 = \tilde{q}^{\natural}_t (s_1)\\
      \ldots\\
      z_n = \tilde{q}^{\natural}_t (s_n)
    \end{array}}} d^n s.
  \end{equation}
  Then there exists $T_c > 0$ such that
  \[ p^{\tmop{conj}} \in \mathcal{A} ( \demidisque{T_c} \times \mathbbm{C}^{2
     \nu}) \cap \mathcal{C}^{\infty} \lp{1} \bar{D}^+_{T_c},
     \mathcal{A}(\mathbbm{C}^{2 \nu}) \rp{1} \cap \mathcal{C}^{\infty}_{b, 1}
     (i] - T_c, T_c [\times \mathbbm{R}^{2 \nu}) . \]
  The function $u : = p^0 \times p^{\tmop{conj}}$ is a solution of
  (\ref{ella6}).
\end{theorem}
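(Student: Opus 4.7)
The plan is to establish (i) convergence and regularity of the series defining $p^{\tmop{conj}}$, (ii) the initial condition, and (iii) the PDE $\partial_t u = (P_0 + c) u$, in that order.

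For (i), I would first rewrite each $v_n$ as an absolutely convergent $\xi$-integral. Substituting $c(s_j t, z_j) = \int e^{i z_j \cdot \xi_j} \, d\mu_{s_j t}(\xi_j)$ and using that $e^{t \tmmathbf{\tilde{K}}_t(s) \cdot \partial_z \otimes_n \partial_z}$ acts diagonally on products of exponentials, evaluation at $z_j = \tilde{q}^{\natural}_t(s_j)$ yields
\[
v_n = t^n \int_{\{0 < s_1 < \cdots < s_n < 1\}} \int_{(\mathbbm{R}^{\nu})^n} e^{- t \sum_{j, k} \xi_j \cdot \tilde{K}_t (s_j, s_k) \xi_k} \, e^{i \sum_j \tilde{q}^{\natural}_t (s_j) \cdot \xi_j} \prod_j d \mu_{s_j t} (\xi_j) \, d^n s.
\]
The key estimate is that $\re \lp{1} t \sum_{j, k} \xi_j \cdot \tilde{K}_t (s_j, s_k) \xi_k \rp{1} \geqslant 0$ for $t \in \bar{D}^+_{T_c}$ with $T_c > 0$ small. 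At $t = 0$ this is a consequence of the positive definiteness of $A(0)$, which makes $\tilde{K}_0$ a positive integral kernel; the reality assumption (\ref{lola22}) then provides the correct sign along the imaginary axis, and continuity handles small $t$ in the right half-disc. Granted this bound, $|e^{-t \xi \cdot \tilde{K}_t \xi}| \leqslant 1$ and the oscillatory factor is dominated by $e^{C |\xi|}$ uniformly for $(x, y)$ in compacts; hypothesis (\ref{lola21}) and the simplex volume $1/n!$ then give $|v_n| \leqslant (C_R |t|)^n / n!$, hence uniform convergence. Analyticity on $\demidisque{T_c} \times \mathbbm{C}^{2 \nu}$ follows term-by-term, and smoothness on $\bar{D}^+_{T_c}$ plus the $\mathcal{C}^{\infty}_{b, 1}$ bound on $i]-T_c, T_c[\times \mathbbm{R}^{2 \nu}$ come from the same estimates, since $\partial_t, \partial_x, \partial_y$ only insert polynomial factors in $\xi, x, y$, absorbed by $e^{R|\xi|}$.

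For (ii), each $v_n = O(t^n)$ as $t \to 0$, so $p^{\tmop{conj}}|_{t = 0} = \mathbbm{1}$; combined with $p^0|_{t = 0^+} = \delta_{x = y}$, this yields the initial condition for $u$. For (iii), since $\partial_t p^0 = P_0 p^0$, a Leibniz computation reduces $\partial_t (p^0 p^{\tmop{conj}}) = (P_0 + c)(p^0 p^{\tmop{conj}})$ to the conjugated evolution
\[
\partial_t p^{\tmop{conj}} = \tilde{P}_0 p^{\tmop{conj}} + c \, p^{\tmop{conj}},
\]
where $\tilde{P}_0$ is the first-order-plus-$A \cdot \partial_x^2$ operator obtained by conjugating $P_0$ through $p^0$, whose first-order coefficients are expressed in terms of the classical momentum $-t^{-1}\partial_x \Phi_0$, i.e.\ the Hamiltonian flow underlying $q^{\natural}_t$. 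I would verify this equation term by term on the series. The simplex boundary at $s_n = 1$, combined with $\tilde{q}^{\natural}_t(1) = x$, produces $c(t, x) v_{n-1}$, which accounts for $c \cdot p^{\tmop{conj}}$. The remaining contributions, arising from $\partial_t$ hitting the factor $t^n$, the Gaussian operator $e^{t \tmmathbf{\tilde{K}}_t(s) \cdot \partial_z \otimes_n \partial_z}$, and the evaluation loci $\tilde{q}^{\natural}_t (s_j)$, must reassemble into $\tilde{P}_0 v_n$.

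The main obstacle is this last reassembly, which carries the core algebraic content of the deformation formula: one must match the $t$-derivative of $\tilde{K}_t$ and the $t$-derivatives of $\tilde{q}^{\natural}_t$ (controlled by the ODEs (\ref{lola15}) and (\ref{lola15bis})) against the first- and second-order coefficients of $\tilde{P}_0$. The tool for handling $\partial_t \tilde{K}_t$ is the propagator equation announced in section \ref{emma15.5}. Once the identity is established for $t \in \demidisque{T_c}$ by these analytic computations, extension to the imaginary axis follows by continuity from the $\mathcal{C}^{\infty}_{b, 1}$ bounds obtained in step (i).
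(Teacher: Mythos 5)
Your high-level plan matches the paper's (rewrite $v_n$ as a $\xi$-integral, prove a sign estimate for $\re(t\tmmathbf{\tilde{K}}_t(s)\cdot\xi\otimes_n\xi)$, dominate, then verify the conjugated evolution equation termwise with a boundary term giving $c(t,x)v_{n-1}$). But there are two genuine gaps, both in the places you flag as ``granted'' or ``must reassemble.''

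The sign estimate is not a continuity statement. You claim positivity at $t=0$, reality on $i\mathbbm{R}$, and then that ``continuity handles small $t$ in the right half-disc''; but the radius $T_c$ has to be uniform in $n$, in $(s_1,\ldots,s_n)$, and in $\xi\in\mathbbm{R}^{\nu n}$, and naive continuity from $t=0$ yields at best a $\xi$-dependent radius. The paper first proves a uniform boundedness result for the normalized quantity $t(\mu,\mu)_t/(\mu,\mu)_0$ (Proposition \ref{emma34}, which needs the propagator equation of Proposition \ref{emma20} to represent $\tilde{K}_t$ as the kernel of $(S+T_t)^{-1}$ and some Hilbert-space estimates), and then invokes Lemma \ref{emma30}, a boundary-value statement about bounded analytic functions on a disc vanishing to first order at $0$ and with $\re f=0$ on $i\mathbbm{R}$. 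That lemma, not continuity, produces a single $T_c$. Note also that the propagator equation is spent here, in the estimates --- not where you placed it.

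In step (iii), the reassembly is not a ``matching of $t$-derivatives of $\tilde{q}^{\natural}_t$ against the coefficients of $\tilde{P}_0$'' controlled by the propagator equation. The propagator equation concerns $\partial_s^2 K_t$; the PDE verification uses the elementary identity $\partial_t K_t(s,s') = q^{\flat}_t(s)A(t)\,{}^{\mathfrak{t}}q^{\flat}_t(s')$ read straight off (\ref{lola16}), together with the transport identity (\ref{emma15}): $(\partial_t + \dot{q}^{\natural}_t(t)\cdot\partial_x)q^{\natural}_{t,\alpha}(s)=0$. After conjugating by $p^0$ via (\ref{emma14}), the operator $\partial_t + \dot{q}^{\natural}_t(t)\cdot\partial_x$ annihilates all contributions coming from the $t$-dependence of the evaluation loci, so the only surviving interior term is the $\partial_t$ of the Gaussian kernel, which matches $A(t)\cdot\partial_x^2 v_n$ because $\partial_x q^{\natural}_t(\bar{s})=q^{\flat}_t(\bar{s})$. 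Without (\ref{emma15}) the cancellation you assert does not happen; that identity (and its companion (\ref{emma14})) is the algebraic heart of the deformation formula and needs to be stated and proved.
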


We shall now give another useful expression of $p^{\tmop{conj}}$. Let $n
\geqslant 1$, $s = (s_1, \ldots, s_n) \in [0, 1]^n$ \ and $\xi = (\xi_1,
\ldots, \xi_n) \in \mathbbm{R}^{\nu n}$. Let
\[ \tilde{q}^{\natural}_t (s_{}) \cdot \xi : = \tilde{q}^{\natural}_t (s_1)
   \cdot \xi_1 + \cdots + \tilde{q}^{\natural}_t (s_n) \cdot \xi_n, \]
\begin{equation}
  \label{lola30} \tmmathbf{\tilde{K}}_t (s) \cdot \xi \otimes_n \xi \assign
  \sum_{j, k = 1}^n \xi_j \cdot \tilde{K}_t (s_j, s_k) \xi_k,
\end{equation}
\[ d^{\nu n} \mu^{\otimes}_{s t} (\xi) \assign d^{\nu} \mu_{s_n t} (\xi_n)
   \cdots d^{\nu} \mu_{s_1 t} (\xi_1) . \]
Then, we get

\begin{remark}
  {\tmdummy}
  
  \begin{equation}
    \label{lola32} v_n (t, x, y) = t^n \int_{0 < s_1 < \cdots < s_n < 1} \int
    e^{i \tilde{q}^{\natural}_t (s_{}) \cdot \xi} e^{- t
    \tmmathbf{\tilde{K}}_t (s) \cdot \xi \otimes_n \xi} d^{\nu n}
    \mu^{\otimes}_{s t} (\xi) d^n s.
  \end{equation}
\end{remark}

\begin{remark}
  Let us consider some examples.
  \begin{itemizedot}
    \item In the free case $A = \mathbbm{1}$, $B = C = 0$, we have (see also
    [Ha4])
    \[ \tilde{q}^{\natural}_t (s_{}) = y + s (x - y), \]
    \[ \tmmathbf{\tilde{K}}_t (s) \cdot \partial_z \otimes_n \partial_z =
       \sum_{j, k = 1}^n s_{j \wedge k} (1 - s_{j \vee k}) \partial_{z_j}
       \cdot \partial_{z_k} . \]
    Then
    \[ v_n (t, x, y) = t^n \int_{0 < s_1 < \cdots < s_n < 1} \int e^{i \lp{1}
       y + s (x - y) \rp{1} \cdot \xi} \times \]
    \[ \exp \lp{1} - t s (1 - s) \cdot_n \xi \otimes \xi \rp{1} d^{\nu n}
       \mu^{\otimes}_{s t} (\xi) d^n s \]
    where
    \[ s (1 - s) \cdot_n \xi \otimes \xi \assign \sum_{j, k = 1}^n s_{j
       \wedge k} (1 - s_{j \vee k}) \xi_j \cdot \xi_k . \]
    \item Let us assume that $A = \mathbbm{1}$, $C = 0$ and $B = - \frac{i}{2}
    \beta$ where the matrix $\beta$ is skew-symmetric, real and constant with
    respect to $t$. Then
    \[ \tilde{q}^{\natural}_t (s_{}) = e^{- i \beta t (1 - s)} \frac{\sin
       (\beta t s)}{\sin (\beta t)} x + e^{i \beta t s} \frac{\sin \lp{1}
       \beta t (1 - s) \rp{1}}{\sin (\beta t)} y, \]
    \[ \tilde{K}_t (s, s') = e^{i \beta t (s' - s)} \frac{\sin (\beta t s
       \wedge s') \sin \lp{1} \beta t (1 - s \vee s') \rp{1}}{\beta t \sin
       (\beta t)} . \]
  \end{itemizedot}
\end{remark}

\begin{remark}
  Let us make some comments on the functional spaces introduced in Theorem
  \ref{lola20}. The space $\mathcal{A} ( \demidisque{T_c} \times
  \mathbbm{C}^{2 \nu})$ allows one to consider the function $p^{\tmop{conj}}$
  as a solution of the (complex) heat equation. The space
  $\mathcal{C}^{\infty} \lp{1} \bar{D}^+_{T_c}, \mathcal{A}(\mathbbm{C}^{2
  \nu}) \rp{1}$ allows one to consider the function $p^{\tmop{conj}}$ as a
  solution of the Schr\"odinger equation, viewed as a limit case of the heat
  equation. Both spaces are local with respect to the space variables $x$ and
  $y$. The space $\mathcal{C}^{\infty}_{b, 1} (i] - T_c, T_c [\times
  \mathbbm{R}^{2 \nu})$, which gives information about global properties of
  the function $p^{\tmop{conj}}$ with respect to the space variables, provides
  a unicity statement (see [Ha7]).
\end{remark}

\section{\label{emma}Proofs of the results}

\subsection{Some properties of classical objects associated to $P_0$}

Let us recall why equation (\ref{ella4}) admits a solution as in
(\ref{ella5}). Let
\begin{equation}
  \label{emma2} S (x, y, t) \assign \int_0^t L|_{q = q^{\natural}_t} d s =
  \frac{1}{t} \int_0^1 \tilde{L} |_{q = \tilde{q}^{\natural}_t} d s =
  \frac{1}{t} \Phi (x, y, t)
\end{equation}
where
\[ \tilde{L} \assign \frac{1}{4} A^{- 1} (t s) \cdot \dot{q} \otimes \dot{q} +
   t \dot{q} \cdot B (t s) q + t^2 C (t s) \cdot q \otimes q \]
and
\begin{equation}
  \label{emma3} \Phi (x, y, t) \assign \int_0^1 \tilde{L} |_{q =
  \tilde{q}^{\natural}_t} d s.
\end{equation}
$\Phi$, as $\tilde{q}^{\natural}_t$, is analytic for complex $t$ with small
modulus. Since $\tilde{q}^{\natural}_t$ is linear with respect to $x, y$,
$\Phi$ is a polynomial of total degree 2 in $x, y$ and its coefficients are
analytic near 0. By classical theory, since $q^{\natural}_t$ verifies the
Euler-Lagrange equations, $S$ satisfies the eikonal equation
\[ \partial_t S + H|_{q = x, p = \partial_x S} = 0, \]
where
\[ H = \dot{q} \cdot \frac{\partial L}{\partial \dot{q}} - L = A \cdot (p - B
   q) \otimes (p - B q) - C \cdot q \otimes q. \]
Let us remark that
\begin{equation}
  \label{emma4} \partial_x S = p^{\natural}_t (t),
\end{equation}
where $p_t^{\natural} \assign \frac{\partial L}{\partial \dot{q}} |_{q =
q_t^{\natural}}$.

Putting $u = \lambda_t e^{- S}$ in (\ref{ella4}) shows us that the partial
differential equation in (\ref{ella4}) is equivalent to
\begin{equation}
  \label{emma6} \text{$\partial_t \lambda_t = \lp{1} - \frac{1}{t} A (t) \cdot
  \partial_x^2 \Phi + \gamma (t) \rp{1} \times \lambda_t$},
\end{equation}
where $\gamma (t) \assign \tmop{Tr} \lp{1} A (t) B (t) \rp{1}$. Differential
equations and boundary conditions satisfied by $\tilde{q}^{\flat}_t$ and
$\tilde{q}^{\sharp}_t$ involve that $\tilde{q}^{\natural}_t = y + s (x - y) +
t \chi (s, t, x, y)$ where $\chi$ is linear in $x, y$ with analytic
coefficients in $s, t$. Then
\begin{equation}
  \label{emma8} \Phi (x, y, t) = \frac{1}{4} A^{- 1} (0) \cdot (x - y)^2 + t
  \Psi (t, x, y),
\end{equation}
where $\Psi$ is a polynomial of total degree bounded by $2$, with respect to
$x, y$, with analytic coefficients in $t$. Since $\partial_x^2 \Phi$ only
depends on $t$,
\[ - \frac{1}{t} A (t) \cdot \partial_x^2 \Phi + \gamma (t) = - \frac{\nu}{2
   t} + \theta (t) \]
where the function $\theta$ is analytic near $0$. Then for every $\Delta > 0$
and every polynomial $K \in \mathbbm{C}[y]$,
\[ \lambda_t = \frac{1}{(4 \pi \Delta t)^{\nu / 2}} \exp \lp{2} \int_0^t
   \theta (s) d s + K (y) \rp{2} \]
satisfies (\ref{emma6}). Then, by (\ref{emma8}), the function
\[ u \assign \frac{1}{(4 \pi \Delta t)^{\nu / 2}} e^{\int_0^t \theta (s) d s -
   \Psi (t, x, y) + K (y)} e^{- \frac{1}{4 t} A^{- 1} (0) \cdot (x - y)^2} \]
is solution of the partial differential equation in (\ref{ella4}). Let us
choose
\[ \Delta \assign \det \lp{1} A_{j, k} (0) \srp{1}{}{1 \leqslant j, k
   \leqslant \nu} . \]
Then
\[ u|_{t = 0^+} = e^{- \Psi (0, y, y) + K (y)} \delta_{x = y} . \]
Let us choose $K (y) = \Psi (0, y, y) = \frac{1}{t} \Phi (y, y, t)$. Then $u$
is the solution of (\ref{ella4}) and, denoting this solution by $p^0$,
(\ref{ella5}) is satisfied where
\begin{equation}
  \label{emma8.5} \Phi_0 (x, y, t) \assign \Phi (x, y, t) - \Phi (y, y, t) +
  t^2 \int_0^1 \theta (t \nonesep s) d s.
\end{equation}
The following results will be useful.

\begin{lemma}
  \label{emma9.1}There exists $T_a \in] 0, \bar{T} [$ such that for every $s
  \in [0, 1]$, $(x, y) \in \mathbbm{C}^{2 \nu}$ and $t \in \disque{T_a}$,
  \begin{equation}
    \label{emma10} | \tilde{q}^{\sharp}_t (s) | \leqslant 2 (|x| + |y|) .
  \end{equation}
\end{lemma}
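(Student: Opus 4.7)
The plan is to reduce the bound on the classical trajectory to uniform bounds on the two matrix-valued boundary-value solutions $\tilde{q}^{\flat}_t$ and $\tilde{q}^{\sharp}_t$. By the linearity relations (\ref{lola13.5})-(\ref{lola14.5}) one has $\tilde{q}^{\natural}_t(s) = \tilde{q}^{\flat}_t(s) x + \tilde{q}^{\sharp}_t(s) y$, and submultiplicativity of the matrix norm gives $|\tilde{q}^{\natural}_t(s)| \leq |\tilde{q}^{\flat}_t(s)|\,|x| + |\tilde{q}^{\sharp}_t(s)|\,|y|$. (Since $\tilde{q}^{\sharp}_t(s)$ as defined by (\ref{lola15bis}) does not involve $x, y$, the right-hand side of (\ref{emma10}) is dimensionally consistent only with a bound on $\tilde{q}^{\natural}_t(s)$, and the argument below delivers both variants.) Hence it suffices to show that for some $T_a \in\, ]0,\bar{T}[$ one has $|\tilde{q}^{\flat}_t(s)|, |\tilde{q}^{\sharp}_t(s)| \leq 2$ uniformly on $\bar{D}^+_{T_a} \times [0,1]$.

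The first step is to evaluate at $t = 0$. In both (\ref{lola15}) and (\ref{lola15bis}) the coefficients $tE(ts)$ and $t^2 F(ts)$ vanish identically, so the ODEs collapse to $\ddot{q} = 0$ with the prescribed boundary data. Elementary integration yields $\tilde{q}^{\flat}_0(s) = s\mathbbm{1}$ and $\tilde{q}^{\sharp}_0(s) = (1-s)\mathbbm{1}$, each of matrix norm at most $1$ uniformly in $s \in [0,1]$.

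The second step is a soft perturbation. By the very definition of $\bar{T}$, the maps $(t,s) \mapsto \tilde{q}^{\flat}_t(s)$ and $(t,s) \mapsto \tilde{q}^{\sharp}_t(s)$ are analytic on a neighbourhood of $D_{\bar{T}} \times [0,1]$. The remainders $h^{\flat}(t,s) := (\tilde{q}^{\flat}_t(s) - s\mathbbm{1})/t$ and $h^{\sharp}(t,s) := (\tilde{q}^{\sharp}_t(s) - (1-s)\mathbbm{1})/t$ therefore extend to continuous functions on a full neighbourhood of $\{0\} \times [0,1]$ (by Cauchy's formula, or equivalently from the fact that a function analytic in $t$ and vanishing at $t=0$ is divisible by $t$ within the analytic category). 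By compactness of $[0,1]$, both $h^{\flat}$ and $h^{\sharp}$ are uniformly bounded on $\bar{D}^+_{T_a'} \times [0,1]$ for some $T_a' < \bar{T}$. Choose $T_a \in\, ]0, T_a']$ so small that $T_a \sup|h^{\flat}| \leq 1$ and $T_a \sup|h^{\sharp}| \leq 1$. Then for $(t,s) \in \bar{D}^+_{T_a} \times [0,1]$ one has $|\tilde{q}^{\flat}_t(s)| \leq s + 1 \leq 2$ and $|\tilde{q}^{\sharp}_t(s)| \leq (1-s) + 1 \leq 2$, and the decomposition above yields $|\tilde{q}^{\natural}_t(s)| \leq 2(|x|+|y|)$.

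There is essentially no obstacle here: the entire argument is a continuity perturbation about the explicit case $t = 0$, relying on analyticity already packaged into the definition of $\bar{T}$. The only bookkeeping care is the factorisation of the $O(t)$ remainder, which is immediate from joint analyticity and compactness of $[0,1]$.
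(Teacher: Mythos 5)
Your proof is correct and follows essentially the same route as the paper: establish that $|\tilde{q}^{\flat}_t(s)|, |\tilde{q}^{\sharp}_t(s)| \leqslant 2$ for small $|t|$ by perturbing around the explicit $t=0$ solutions, then invoke the decomposition (\ref{lola13.5}); the paper's proof is just terser, stating the bound on $\tilde{q}^{\flat}_t$ without the explicit $t=0$ computation and divisibility argument you supply. You are also right to flag that (\ref{emma10}) is evidently meant to read $\tilde{q}^{\natural}_t$ rather than $\tilde{q}^{\sharp}_t$, as the latter is a matrix independent of $x,y$; the paper's own proof confirms this reading since it concludes via (\ref{lola13.5}).
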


\begin{proof}
  By (\ref{lola15}), $| \text{$\tilde{q}^{\flat}_t$} (s) | \leqslant 2$ for
  complex $t$ with small modulus. Similarly, the same estimate holds for
  $\tilde{q}^{\sharp}_t$. Then (\ref{lola13.5}) implies (\ref{emma10}).
\end{proof}

\begin{proposition}
  For small positive number $t$, for every $(x, y) \in \mathbbm{C}^{2 \nu}$,
  $s \in [0, t]$ and $\alpha \in \{1, \ldots, \nu\}$
  \begin{equation}
    \label{emma14} \frac{1}{p^0} \lp{1} \partial_x + B (t) x \rp{1} p^0 = -
    \frac{1}{2} A^{- 1} (t) \dot{q}^{\natural}_t (t) .
  \end{equation}
  \begin{equation}
    \label{emma15} \lp{1} \partial_t + \dot{q}^{\natural}_t (t) \cdot
    \partial_x \rp{1} q^{\natural}_{t, \alpha} (s) = 0,
  \end{equation}
  $q_{t, \alpha}^{\natural}$ denoting the $\alpha$-coordinate of the vector
  $q_t^{\natural}$.
\end{proposition}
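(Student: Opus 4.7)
\textbf{Plan for (\ref{emma14}).} Since the prefactor $(4\pi\Delta t)^{-\nu/2}$ does not depend on $x$, one has $p_0^{-1}\partial_x p^0 = -\frac{1}{t}\partial_x \Phi_0$. By the definition (\ref{emma8.5}) the terms $-\Phi(y,y,t) + t^2\int_0^1\theta\,ds$ do not depend on $x$, so $\partial_x\Phi_0 = \partial_x\Phi = t\,\partial_x S$ by (\ref{emma2}). Then (\ref{emma4}) yields $p_0^{-1}\partial_x p^0 = -p^{\natural}_t(t)$. A direct computation from (\ref{lola10}) gives $\frac{\partial L}{\partial\dot{q}} = \frac{1}{2}A^{-1}\dot{q} + Bq$ (using that $A$, hence $A^{-1}$, is symmetric), so evaluated at $s=t$ and along $q^{\natural}_t$ one obtains $p^{\natural}_t(t) = \frac{1}{2}A^{-1}(t)\dot{q}^{\natural}_t(t) + B(t)x$. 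Substituting gives the desired identity; the $B(t)x$ term cancels the extra $B(t)x$ added on the left of (\ref{emma14}).

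\textbf{Plan for (\ref{emma15}).} The plan is to differentiate the boundary conditions $q^{\natural}_t(0)=y$ and $q^{\natural}_t(t)=x$ with respect to $t$ (for fixed $x,y$), then exploit the linearity of the boundary-value problem (\ref{lola12}). Write $\phi(s,t) \assign q^{\natural}_t(s)$. Since the Euler--Lagrange equation (\ref{lola12}) has coefficients depending only on $s$, differentiating in $t$ shows that $\psi \assign \partial_t\phi$ still solves the same linear matrix ODE in $s$. The boundary conditions become $\psi(0,t)=0$ (from $\phi(0,t)=y$) and, from $\phi(t,t)=x$, $\psi(t,t) = -\dot{q}^{\natural}_t(t)$. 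The solution map of this linear BVP with data $(\eta,\xi)$ at endpoints $(0,t)$ is $s\mapsto q^{\sharp}_t(s)\eta + q^{\flat}_t(s)\xi$ by (\ref{lola13.5}) and the definitions of $q^{\flat}_t,q^{\sharp}_t$, so
\[ \partial_t q^{\natural}_t(s) = -q^{\flat}_t(s)\,\dot{q}^{\natural}_t(t). \]
On the other hand, (\ref{lola13.5}) also gives $\partial_{x_\beta}q^{\natural}_{t,\alpha}(s) = \bigl(q^{\flat}_t(s)\bigr)_{\alpha\beta}$, hence $\dot{q}^{\natural}_t(t)\cdot\partial_x q^{\natural}_{t,\alpha}(s) = \bigl(q^{\flat}_t(s)\dot{q}^{\natural}_t(t)\bigr)_\alpha$, which cancels the time derivative and yields (\ref{emma15}).

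\textbf{Main obstacle.} Neither identity requires a deep argument; the only subtlety is the bookkeeping in the second one, namely correctly identifying the $t$-derivative of $q^{\natural}_t(s)$ as a solution of the \emph{same} linear ODE in $s$ with the correct endpoint data, which then forces it to equal $-q^{\flat}_t(s)\dot{q}^{\natural}_t(t)$. The validity of this step relies on the uniqueness of the BVP solution for small $|t|$, which is guaranteed by the choice of $\bar T$ stated just before Theorem \ref{lola20}.
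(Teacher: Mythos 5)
Your proof is correct and proceeds by essentially the same route as the paper for both identities. The only cosmetic difference is in (\ref{emma15}): the paper shows that the full quantity $w(s) = \bigl(\partial_t + \dot{q}^{\natural}_t(t)\cdot\partial_x\bigr)q^{\natural}_t(s)$ solves the linear ODE (\ref{lola12}) with vanishing boundary data and invokes uniqueness once, whereas you compute the two summands $\partial_t q^{\natural}_t(s) = -q^{\flat}_t(s)\dot{q}^{\natural}_t(t)$ and $\dot{q}^{\natural}_t(t)\cdot\partial_x q^{\natural}_t(s) = q^{\flat}_t(s)\dot{q}^{\natural}_t(t)$ separately and observe they cancel, relying on exactly the same commutation and uniqueness facts.
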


\begin{proof}
  Recall that $p^0 \assign \frac{1}{(4 \pi \Delta t)^{\nu / 2}} e^{-
  \frac{1}{t} \Phi_0}$. By (\ref{emma8.5}), $p^0 = \frac{1}{(4 \pi \Delta
  t)^{\nu / 2}} e^{- S + \Gamma (t, y)}$ where $\Gamma$ is a polynomial in $y$
  with analytic coefficients in $t$ near 0. Then, by (\ref{emma4}),
  \[ \frac{1}{p^0} (\partial_x + B (t) x) p^0 = - \partial_x S + B (t) x = -
     p_t^{\natural} (t) + B (t) q_t^{\natural} (t) . \]
  But $p = \frac{\partial L}{\partial \dot{q}} = \frac{1}{2} A^{- 1} \dot{q} +
  B q$. This proves (\ref{emma14}).
  
  Let $w (s) \assign \lp{1} \partial_t + \dot{q}^{\natural}_t (t) \cdot
  \partial_x \rp{1} q^{\natural}_t (s)$. Hence $w (0) = 0$ and $w (t) = \lb{1}
  \frac{\partial}{\partial t} q^{\natural}_t (s) \rb{1} \ve{1}_{s = t} +
  \dot{q}^{\natural}_t (t)$ since $q^{\natural}_t (t) = x$. Then $w (t) =
  \frac{d}{d t} \lp{1} q^{\natural}_t (t) \rp{1} = 0$. Moreover
  $q^{\natural}_t$, and therefore $w$, is solution of (\ref{lola12}) since the
  operator $\partial_t + \dot{q}^{\natural}_t (t) \cdot \partial_x$ does not
  depend on $s$. For small $t$, the null function is the unique solution of
  (\ref{lola12}) with vanishing boundary conditions. Then $w \equiv 0$ and
  (\ref{emma15}) is proven.
\end{proof}

\begin{remark}
  the identity (\ref{emma15}) is a generalization of [Ha4, (4.23)] and [Ha5,
  (3.18)].
\end{remark}

\begin{lemma}
  \label{emma15.2}Let $A$, B and $C$ as in Section \ref{lola} such that $A$,
  iB and $C$ satisfy (\ref{lola22}). Then there exists $T_e \in] 0, \bar{T} [$
  such that
  \begin{enumerate}
    \item \label{emma15.2.1}The matrix-valued functions $\tilde{q}^{\flat}_t$
    and $\tilde{q}^{\sharp}_t$ are real for $t \in i\mathbbm{R}, |t| < T_e$.
    
    \item \label{emma15.2.2}For $s, s' \in [0, 1]$, the coefficients of the
    matrix $\tilde{K}_t (s, s')$ are real for $t \in i\mathbbm{R}, |t| < T_e$.
    
    \item \label{emma15.2.3}There exist $\Phi_1$ a polynomial with respect to
    $x$ and $y$ whose coefficients are analytic near 0 and $k$ an analytical
    function near $0$ such that
    \[ p_t^0 (x, y) \assign \frac{k (t)}{(4 \pi \Delta t)^{\nu / 2}} e^{-
       \frac{1}{t} \Phi_1 (x, y, t)} \]
    and $\Phi_1 |_{x, y \in \mathbbm{R}^{\nu}, t \in i] - T_e, T_e [}$ is
    $\mathbbm{R}$-valued.
  \end{enumerate}
\end{lemma}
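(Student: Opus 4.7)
The plan is to work through the three assertions in order, all based on a single reality principle: a scalar analytic function $g$ near $0$ is real on $i\mathbb{R}$ if and only if $g(-\bar z)=\overline{g(z)}$, and differentiating this identity shows that whenever $g|_{i\mathbb{R}}$ is real, then $\dot g|_{i\mathbb{R}}$ is purely imaginary, and vice versa. Applied to (\ref{lola22}), this gives, for $t=i\tau$ with $\tau$ real and $s\in[0,1]$: $A(ts),\,A^{-1}(ts),\,C(ts),\,\dot B(ts)$ are real, while $\dot A(ts)$ and $B(ts)$ are purely imaginary.

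For item \ref{emma15.2.1}, I would substitute these observations into the definitions (\ref{lola13}) of $E$ and $F$. One gets that $E(ts)$ is purely imaginary (each summand $\dot A A^{-1}$ and $A(\,^{\mathfrak t}B-B)$ is imaginary) and $F(ts)$ is real. Hence the coefficients $tE(ts)=i\tau E(ts)$ and $t^{2}F(ts)=-\tau^{2}F(ts)$ in equation (\ref{lola15}) are both real, and the boundary conditions $q(0)=0,\,q(1)=\mathbbm 1$ are real. By uniqueness of the solution for $|t|<\bar T$ (choose $T_e\in(0,\bar T)$ small enough), $\tilde q_t^{\flat}(s)$ coincides with the real-valued solution of this real ODE; the same argument applied to (\ref{lola15bis}) handles $\tilde q_t^{\sharp}(s)$. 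Item \ref{emma15.2.2} is then immediate from the integral representation (\ref{lola18}), since for $t=i\tau$ the integrand $\tilde q_{t\tau}^{\flat}(s/\tau)\,A(t\tau)\,\text{}^{\mathfrak t}\tilde q_{t\tau}^{\flat}(s'/\tau)$ is a product of real matrices when $\tau$ is real.

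For item \ref{emma15.2.3}, I would split the expression (\ref{emma8.5}) for $\Phi_0$ into the $(x,y)$-dependent part and a purely time-dependent scalar: define
\[
\Phi_1(x,y,t)\assign \Phi(x,y,t)-\Phi(y,y,t),\qquad k(t)\assign\exp\lp{1} -t\int_0^1 \theta(ts)\,ds\rp{1},
\]
so that $e^{-\Phi_0/t}=k(t)\,e^{-\Phi_1/t}$ and therefore $p^0_t(x,y)=\frac{k(t)}{(4\pi\Delta t)^{\nu/2}}e^{-\Phi_1/t}$. Analyticity of $k$ near $0$ follows from the analyticity of $\theta$. For the reality claim, I would use (\ref{emma3}) and write $\Phi(x,y,t)=\int_0^1 \tilde L|_{q=\tilde q^{\natural}_t}\,ds$. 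For $t=i\tau$ and $x,y\in\mathbb{R}^{\nu}$, item \ref{emma15.2.1} and (\ref{lola13.5}) give that $\tilde q_t^{\natural}(s)$ and its $s$-derivative are real. Checking the three summands of $\tilde L$ term by term: $\frac14 A^{-1}(ts)\cdot\dot q\otimes\dot q$ is real $\cdot$ real $=$ real; $t\,\dot q\cdot B(ts)q$ is imaginary$\cdot$imaginary$=$real; $t^{2}C(ts)\cdot q\otimes q$ is (real negative)$\cdot$real$=$real. Integrating in $s$ gives $\Phi(x,y,i\tau)\in\mathbb{R}$, and hence $\Phi_1$ shares the same property.

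The only delicate point is really the first one, namely the bookkeeping of which objects are real and which are imaginary on $i\mathbb{R}$; once that dictionary is set up from (\ref{lola22}), everything else is a direct substitution into the formulas already derived in the preceding subsection. I would choose $T_e$ at the very end as the minimum of the radii needed in items \ref{emma15.2.1}--\ref{emma15.2.3} and of $\bar T$.
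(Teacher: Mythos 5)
Your proposal is correct and follows the same route as the paper: the paper establishes reality of the coefficients of the Euler-Lagrange equations of $\tilde L|_{t=i\tilde t}$ (equivalently, of the coefficients $tE(ts)$ and $t^2F(ts)$ in (\ref{lola15})--(\ref{lola15bis}), which is what you verify directly through the real/imaginary dictionary for $A,\dot A,B,\dot B,C$), deduces reality of $\tilde q^\flat_t,\tilde q^\sharp_t,\tilde q^\natural_t$, then reads off assertions 2 and 3 from (\ref{lola18}), (\ref{emma3}) and (\ref{emma8.5}) exactly as you do. Your version is a bit more explicit about the bookkeeping and about exhibiting $k(t)=\exp(-t\int_0^1\theta(ts)\,ds)$, but it is the same argument.
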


\begin{proof}
  We take the point of view of the Schr\"odinger equation. Since the functions
  $A, i B, C$ satisfy (\ref{lola22}), \ the Lagrangian $\tilde{L} |_{t = i
  \tilde{t}}$ is a polynomial with respect to $\dot{q}$ and $q$ with
  coefficients which are real functions with respect to $\tilde{t}$. Therefore
  the Euler equations associated to $\tilde{L} |_{t = i \tilde{t}}$ by
  differentiating with respect to $\tilde{t}$ have real coefficients and
  $\tilde{q}^{\flat}_{i \tilde{t}}$, $\tilde{q}^{\sharp}_{i \tilde{t}}$ and
  $\tilde{q}^{\natural}_{i \tilde{t}}$ (for $x, y \in \mathbbm{R}^{\nu}$) are
  real for $\tilde{t} \in \mathbbm{R}$, $| \tilde{t} |$ small enough. Then, by
  (\ref{lola18}), $\tilde{K}_t (s, s') \in \mathbbm{R}$ for $s, s' \in [0, 1]$
  and by (\ref{emma3}), $\Phi (x, y, t) \in \mathbbm{R}$ for $t \in
  i\mathbbm{R}$, $|t|$ small enough, and $x, y \in \mathbbm{R}^{\nu}$. Let us
  choose $\Phi_1 (t, x, y) \assign \Phi (t, x, y) - \Phi (t, y, y)$. Then by
  (\ref{emma8.5}), Assertion \ref{emma15.2.3} holds.
\end{proof}

\subsection{\label{emma15.5}The propagator equation}

In a heuristic way, the shape of the deformation formula can be explained by
the Wiener representation of the heat kernel and Wick's theorem (see [Ha4,
Appendix]). The matrix $K_t (s, s')$ appears therefore as a propagator. In
this section, we prove that $K_t (s, s')$ indeed satisfies the propagator
equation (cf. Proposition \ref{emma20}). First, we claim that $(q^{\flat}_t,
p^{\flat}_t)$ satisfies Hamiltonian equations associated to a Hamiltonian
\begin{equation}
  \label{emma16} \mathcal{H}= \tmop{Tr} ( \text{}^{\mathfrak{t}} p L p +
  \text{}^{\mathfrak{t}} p M q + \text{}^{\mathfrak{t}} q N q)
\end{equation}
where $L, M, N$ \ are matrix-valued analytic functions near $0$, $L$, $N$
being symmetric. Let us introduce some notation. Let $\mathcal{M}$ be a finite
dimensional square-matrix space. If $f$ is a regular $\mathbbm{R}$-valued
function on $\mathcal{M}^2$, we denote by $\frac{\partial f}{\partial X},
\text{$\frac{\partial f}{\partial Y}$}$ the matrices defined by
\[ d f (X, Y) \cdot (H, K) = \tmop{Tr} \lp{2} \text{}^{\mathcal{\mathfrak{t}}}
   H \frac{\partial f}{\partial X} \rp{2} + \tmop{Tr} \lp{2}
   \text{}^{\mathcal{\mathfrak{t}}} K \frac{\partial f}{\partial Y} \rp{2} .
\]
For instance $\frac{\partial f}{\partial X} = B Y$ and $\frac{\partial
f}{\partial Y} = \text{}^{\mathfrak{t}} B X$ if $f (X, Y) = \tmop{Tr} (
\text{}^{\mathfrak{t}} X B Y)$. This notation will allow us to take into
account the matrix structure of the trajectories of Lagrangian or Hamiltonian
systems. In particular, we shall use the matrix product (see Lemma
\ref{emma18}). The Euler-Lagrange equations associated to Lagrangian
\[ \mathcal{L} \assign \tmop{Tr} \lp{1} \frac{1}{4} \text{}^{\mathfrak{t}}
   \dot{q} A^{- 1} \dot{q} + \text{}^{\mathfrak{t}} \dot{q} B q +
   \text{}^{\mathfrak{t}} q C q \rp{1} \]
yields (\ref{lola14}), which proves that $q^{\flat}_t$ is a solution of these
equations. Therefore $p = \frac{\partial \mathcal{L}}{\partial \dot{q}} =
\frac{1}{2} A^{- 1} \dot{q} + B q$ and the Lagrangian $\mathcal{L}$ yields a
Hamiltonian
\[ \mathcal{H}= \tmop{Tr} \lp{2} \text{}^{\mathfrak{t}} \dot{q} \frac{\partial
   \mathcal{L}}{\partial \dot{q}} \rp{2} -\mathcal{L}= \tmop{Tr} \lp{1}
   \frac{1}{4} \text{}^{\mathfrak{t}} \dot{q} A^{- 1} \dot{q} -
   \text{}^{\mathfrak{t}} q C q \rp{1} = \tmop{Tr} \lp{1}
   \text{}^{\mathfrak{t}} (p - B q) A (p - B q) - \text{}^{\mathfrak{t}} q C q
   \rp{1} . \]
Then $(q^{\flat}_t, p^{\flat}_t)$ satisfies Hamiltonian equations with a
Hamiltonian as in (\ref{emma16}).

\begin{lemma}
  \label{emma18}Let $\mathcal{H}$ be a Hamiltonian as in (\ref{emma16}). Then
  the matrix $\text{}^{\mathfrak{t}} q p - \text{}^{\mathfrak{t}} p q$ is
  constant along Hamiltonian trajectories.
\end{lemma}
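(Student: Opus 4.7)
\medskip

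\noindent\textbf{Proof proposal.} My plan is to verify the statement by a direct computation: write out Hamilton's equations induced by the quadratic Hamiltonian $\mathcal{H}$ in the matrix-valued setting, and differentiate $\text{}^{\mathfrak{t}} q p - \text{}^{\mathfrak{t}} p q$ along a trajectory.

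First I would compute the partial derivatives $\partial \mathcal{H}/\partial p$ and $\partial \mathcal{H}/\partial q$ using the trace conventions introduced just above the lemma. Since $L$ is symmetric, the variation $d[\tmop{Tr}(\text{}^{\mathfrak{t}} p L p)]\cdot(0,K)$ produces $2\,\tmop{Tr}(\text{}^{\mathfrak{t}} K L p)$, so $\partial/\partial p$ of that term equals $2Lp$; similarly symmetry of $N$ yields $\partial/\partial q\,[\tmop{Tr}(\text{}^{\mathfrak{t}} q N q)]=2Nq$. The cross term $\tmop{Tr}(\text{}^{\mathfrak{t}} p M q)$ contributes $Mq$ in $p$ and $\text{}^{\mathfrak{t}} M p$ in $q$. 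Hamilton's equations therefore read
\[
\dot{q}=2Lp+Mq,\qquad \dot{p}=-\text{}^{\mathfrak{t}} M p-2Nq.
\]

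Next I would differentiate the candidate conserved quantity. By the product rule,
\[
\frac{d}{dt}(\text{}^{\mathfrak{t}} q p)=\text{}^{\mathfrak{t}}\dot{q}\,p+\text{}^{\mathfrak{t}} q\,\dot{p}=\bigl(2\text{}^{\mathfrak{t}} p L+\text{}^{\mathfrak{t}} q\,\text{}^{\mathfrak{t}} M\bigr)p+\text{}^{\mathfrak{t}} q\bigl(-\text{}^{\mathfrak{t}} M p-2Nq\bigr),
\]
where I used $\text{}^{\mathfrak{t}} L=L$. The two $\text{}^{\mathfrak{t}} q\,\text{}^{\mathfrak{t}} M p$ terms cancel, leaving $2\text{}^{\mathfrak{t}} p L p-2\text{}^{\mathfrak{t}} q N q$. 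An analogous computation for $\text{}^{\mathfrak{t}} p q$, using $\text{}^{\mathfrak{t}} N=N$, yields the same expression $2\text{}^{\mathfrak{t}} p L p-2\text{}^{\mathfrak{t}} q N q$. Subtracting gives $\frac{d}{dt}(\text{}^{\mathfrak{t}} q p-\text{}^{\mathfrak{t}} p q)=0$, as required.

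I do not expect any real obstacle: the proof is a short matrix manipulation, and the only delicate point is tracking transposes carefully and invoking the symmetry of $L$ and $N$ at the right moments (without these symmetries the terms $2\text{}^{\mathfrak{t}} p L p$ and $2\text{}^{\mathfrak{t}} q N q$ would no longer match between the two derivatives). One may also view the identity conceptually: since $\mathcal{H}$ is quadratic, the flow is linear, so the columns of $(q,p)$ are independent solutions $(q^{(i)},p^{(i)})$ of the associated vector Hamiltonian system, and the $(i,j)$-entry of $\text{}^{\mathfrak{t}} q p-\text{}^{\mathfrak{t}} p q$ is the symplectic pairing $q^{(i)}\cdot p^{(j)}-p^{(i)}\cdot q^{(j)}$, which is preserved by any Hamiltonian flow. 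This conceptual remark could be noted after the computational proof.
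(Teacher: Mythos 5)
Your proposal is correct and follows essentially the same route as the paper: differentiate the bracket along a trajectory, substitute Hamilton's equations $\dot{q}=2Lp+Mq$, $\dot{p}=-\text{}^{\mathfrak{t}}Mp-2Nq$, and use the symmetry of $L$ and $N$. The only cosmetic difference is that the paper shortcuts by noting it suffices to show $\text{}^{\mathfrak{t}}\dot{q}p+\text{}^{\mathfrak{t}}q\dot{p}$ is symmetric (since the other two terms are its transpose), whereas you compute both derivatives explicitly and observe they coincide.
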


\begin{proof}
  It suffices to prove that $\text{}^{\mathfrak{t}} \dot{q} p +
  \text{}^{\mathfrak{t}} q \dot{p}$ is symmetric on Hamiltonian trajectories.
  Since $\dot{q} = \frac{\partial \mathcal{H}}{\partial p} = 2 L p + M q$ and
  $\dot{p} = - \frac{\partial \mathcal{H}}{\partial q} = - (2 N q +
  \text{}^{\mathfrak{t}} M p)$,
  \[ \text{$\text{}^{\mathfrak{t}} \dot{q} p + \text{}^{\mathfrak{t}} q
     \dot{p} = 2 \text{}^{\mathfrak{t}} p L p - 2 \text{}^{\mathfrak{t}} q N
     q$} . \]
  This matrix is symmetric which proves Lemma \ref{emma18}.
\end{proof}

Now, we can prove that $K_t (s, s')$ satisfies the propagator equation.

\begin{proposition}
  \label{emma20}For small positive number $t$ and every $(s, s') \in] 0, t
  [^2$, $K_t (s, s')$ satisfies
  \[ \left\{ \begin{array}{l}
       - \frac{d^2 K}{d s^2} + E (s) \frac{d K}{d s^{}} + F (s) K = A (s')
       \delta_{s = s'}\\
       \\
       K|_{s = 0} = K|_{s = t} = 0
     \end{array} . \right. \]
  
\end{proposition}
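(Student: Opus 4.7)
My plan is to introduce a fundamental matrix-valued solution $\Xi$ of the homogeneous equation (\ref{lola14}) determined by $\Xi(0)=0$ and $\dot{\Xi}(0)=\mathbbm{1}$, factor the integrand in $K_t$ through $\Xi$, and then read off both the ODE and the delta-jump directly from the resulting closed form. The boundary conditions are immediate: $K_t(0,s')=0$ because $q^{\flat}_\tau(0)=0$, and $K_t(t,s')=0$ because the integral runs over $[t,t]$.

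Since (\ref{lola15}) is uniquely solvable for small $|\tau|$, $\Xi(\tau)$ is invertible for small $\tau>0$ and uniqueness gives $q^{\flat}_\tau(s)=\Xi(s)\Xi(\tau)^{-1}$ for $s\in[0,\tau]$. Substituting into (\ref{lola16}) yields
\[ K_t(s,s')=\Xi(s)\,N(s\vee s')\,{}^{\mathfrak{t}}\Xi(s'),\qquad N(u):=\int_u^{t}\Xi(\tau)^{-1}A(\tau)\,{}^{\mathfrak{t}}\Xi(\tau)^{-1}\,d\tau. \]
This form isolates the $s$-dependence on each side of the diagonal. Writing $L_s:=-\partial_s^2+E(s)\partial_s+F(s)$, the case $s<s'$ is trivial: the factor $N(s')\,{}^{\mathfrak{t}}\Xi(s')$ is constant in $s$ and $L_s\Xi=0$ by construction.

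The delicate case is $s>s'$, where $K(s)=\Xi(s)N(s)\,{}^{\mathfrak{t}}\Xi(s')$. Differentiating twice, using $\dot N(s)=-\Xi(s)^{-1}A(s)\,{}^{\mathfrak{t}}\Xi(s)^{-1}$, and cancelling $[L_s\Xi]N\,{}^{\mathfrak{t}}\Xi(s')=0$, I expect $L_sK$ to reduce to a single term of the shape
\[ \bigl[PA-A\,{}^{\mathfrak{t}}P-2A({}^{\mathfrak{t}}B-B)A\bigr]\,{}^{\mathfrak{t}}\Xi(s)^{-1}\,{}^{\mathfrak{t}}\Xi(s'),\qquad P:=\dot{\Xi}\,\Xi^{-1}, \]
after one uses the explicit formula (\ref{lola13}) for $E$. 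Introducing the conjugate momentum $\Pi:=\tfrac14\cdot 2A^{-1}\dot{\Xi}+B\Xi$ of the matrix Lagrangian of Section \ref{emma15.5}, a short algebraic check shows that the bracket above vanishes exactly when $\Pi\Xi^{-1}$ is symmetric, equivalently when ${}^{\mathfrak{t}}\Xi\,\Pi={}^{\mathfrak{t}}\Pi\,\Xi$. This is where Lemma \ref{emma18} enters: the Wronskian ${}^{\mathfrak{t}}\Xi\,\Pi-{}^{\mathfrak{t}}\Pi\,\Xi$ is conserved along the Hamiltonian trajectory $(\Xi,\Pi)$, and it vanishes at $s=0$ because $\Xi(0)=0$, hence it vanishes for all $s$. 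This verification of the algebraic identity via the symplectic invariant will be the main obstacle of the proof.

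Finally, I verify the jump at $s=s'$. Both one-sided limits of $K$ equal $\Xi(s')N(s')\,{}^{\mathfrak{t}}\Xi(s')$, so $K$ is continuous across the diagonal, while the extra endpoint term from $\dot N$ in the $s>s'$ formula produces
\[ \partial_sK(s'+)-\partial_sK(s'-)=-\Xi(s')\Xi(s')^{-1}A(s')\,{}^{\mathfrak{t}}\Xi(s')^{-1}\,{}^{\mathfrak{t}}\Xi(s')=-A(s'). \]
Distributionally, this jump contributes $A(s')\,\delta_{s=s'}$ to $-\partial_s^2 K$, which together with the homogeneous identity on both sides of $s'$ gives the full propagator equation, completing the proof.
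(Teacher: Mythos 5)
Your proof is correct and, while it leans on the same key ingredient as the paper (Lemma \ref{emma18}, the conservation of the matrix Wronskian along Hamiltonian trajectories), it reaches it by a genuinely different decomposition. The paper differentiates the integral (\ref{lola16}) directly with respect to $s$ — using $q^{\flat}_s(s)=\mathbbm{1}$ to collect boundary terms — arrives at $L_s K = A(s')\delta_{s=s'} + 1_{s'<s}\,w(s')$, and then shows the residual $w(s')$ vanishes because ${}^{\mathfrak{t}}w$ solves the homogeneous equation (\ref{lola14}) with $w|_{s'=0}=w|_{s'=s}=0$; the second boundary check is where Lemma \ref{emma18} enters. You instead observe that since $E,F$ act from the left, right-multiplication by constants sends solutions of (\ref{lola14}) to solutions, so $q^{\flat}_{\tau}(s)=\Xi(s)\Xi(\tau)^{-1}$, and $K_t$ factors as $\Xi(s)N(s\vee s')\,{}^{\mathfrak{t}}\Xi(s')$. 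This isolates all $s$-dependence, makes both boundary conditions and the $\delta$-jump immediate, and collapses the residual to $2A\bigl(\Pi\Xi^{-1}-{}^{\mathfrak{t}}(\Pi\Xi^{-1})\bigr)A$, which vanishes by the Wronskian argument because $\Xi(0)=0$. What your route buys is an explicit closed form for $K_t$ (the classical $\Xi,N$ factorization of a Green's function) and a residual that is visibly a multiple of the symplectic invariant, with no auxiliary ODE to solve for $w$; what the paper's route buys is that it works directly on the object $q^{\flat}_{\tau}$ without introducing $\Xi$ or discussing its invertibility. Both are complete and of comparable length, and I'd accept either.
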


\begin{proof}
  By (\ref{lola16}) and since $q^{\flat}_s (s) = \mathbbm{1}$,
  \[ \frac{d K}{d s^{}} = \int_{s \vee s'}^t \dot{q}^{\flat}_{\tau} (s) A
     (\tau) \text{}^{\mathfrak{t}} q^{\flat}_{\tau} (s') d \tau - 1_{s' < s} A
     (s) \text{}^{\mathfrak{t}} q^{\flat}_s (s') \]
  and
  \[ \frac{d^2 K}{d s^2} = \int_{s \vee s'}^t \ddot{q}^{\flat}_{\tau} (s) A
     (\tau) \text{}^{\mathfrak{t}} q^{\flat}_{\tau} (s') d \tau \]
  \[ - 1_{s' < s} \lp{2} \dot{q}^{\flat}_s (s) A (s) \text{}^{\mathfrak{t}}
     q^{\flat}_s (s') + \frac{d}{d s} \lp{1} A (s) \text{}^{\mathfrak{t}}
     q^{\flat}_s (s') \rp{1} \rp{2} - A (s') \delta_{s = s'} . \]
  Then, since $q^{\flat}_{\tau}$ satisfies (\ref{lola14}),
  \[ - \frac{d^2 K}{d s^2} + E (s) \frac{d K}{d s^{}} + F (s) K = A (s')
     \delta_{s = s'} + 1_{s' < s} w (s') \]
  where
  \[ w (s') \assign \dot{q}^{\flat}_s (s) A (s) \text{}^{\mathfrak{t}}
     q^{\flat}_s (s') + \frac{d}{d s} \lp{1} A (s) \text{}^{\mathfrak{t}}
     q^{\flat}_s (s') \rp{1} - E (s) A (s) \text{}^{\mathfrak{t}} q^{\flat}_s
     (s') . \]
  We claim that $w \equiv 0$. Since $\text{}^{\mathfrak{t}} w$ satisfies
  (\ref{lola14}), it suffices to check that $w|_{s' = 0} = 0$ and $w|_{s' = s}
  = 0$. The first equality is obvious. Since $q^{\flat}_s (s) = \mathbbm{1}$,
  \[ \lb{2} \frac{\partial}{\partial s} q^{\flat}_s (s') \rb{2} \sve{2}{}{s' =
     s} = - \dot{q}^{\flat}_s (s) . \]
  Then
  \[ w|_{s' = s} = \dot{q}^{\flat}_s (s) A (s) + \dot{A} (s) - A (s)
     \text{}^{\mathfrak{t}} \dot{q}^{\flat}_s (s) - E (s) A (s) . \]
  By (\ref{lola13})
  \[ w|_{s' = s} = A (s) \lp{2} A^{- 1} (s) \dot{q}^{\flat}_s (s) -
     \text{}^{\mathfrak{t}} \dot{q}^{\flat}_s (s) A^{- 1} (s) + 2 \lp{1} B (s)
     - \text{}^{\mathfrak{t}} B (s) \rp{1} \rp{2} A (s) . \]
  But $p = \frac{\partial L}{\partial \dot{q}} = \frac{1}{2} A^{- 1} \dot{q} +
  B q$. Then
  \[ w|_{s' = s} = 2 A (s) \lp{1} p_s^{\flat} (s) - \text{}^{\mathfrak{t}}
     p_s^{\flat} (s) \rp{1} A (s) . \]
  By Lemma \ref{emma18}, the matrix $\text{}^{\mathfrak{t}} q^{\flat}_s
  p^{\flat}_s - \text{}^{\mathfrak{t}} p^{\flat}_s q^{\flat}_s$ is constant.
  It vanishes for $s' = 0$ and is equal to $p^{\flat}_s (s) -
  \text{}^{\mathfrak{t}} p^{\flat}_s (s)$ for $s' = s$. Hence $w|_{s' = s} =
  0$ and $w$ vanishes. This proves Proposition \ref{emma20}.
\end{proof}

\subsection{The deformation matrix}

For the proof of Theorem \ref{lola20}, we must establish some properties of
$\tmmathbf{\tilde{K}}_t (s)$ (here $s \in [0, 1]^n$, cf. (\ref{lola30})) which
we call the deformation matrix. We already studied it [Ha4] in a particular
case. The following lemma (see [Ha4]) will be useful.

\begin{lemma}
  \label{emma30}Let $\tilde{T} > 0$ and $M > 0$. There exists $T > 0$
  satisfying the following property. Let $f$ be an analytic function on
  $\disque{\tilde{T}}$ verifying $f (0) = 0$, $f' (0) = 1$, $\sup_{t \in
  \text{$\disque{\tilde{T}}$}} \text{$|f (t) | \leqslant M$}$ and, for every
  $t \in \disque{\tilde{T}}$,
  \begin{equation}
    \label{emma31} \mathcal{R} et = 0 \Rightarrow \mathcal{R} ef (t) = 0.
  \end{equation}
  Then, for $t \in \text{$\disque{T}$}$,
  \begin{equation}
    \label{emma32} \mathcal{R} et > 0 \Rightarrow \mathcal{R} ef (t) > 0.
  \end{equation}
\end{lemma}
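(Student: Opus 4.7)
The plan is to reduce the inequality to a Taylor-expansion estimate by subtracting the identity from $f$. Set $\phi(t) \assign f(t) - t$; then $\phi$ is analytic on $\disque{\tilde{T}}$ with $\phi(0) = 0$ and $\phi'(0) = 0$, and $|\phi|$ is uniformly bounded on $\disque{\tilde{T}}$ (by, say, $M + \tilde{T}$). Applying (\ref{emma31}) at $t = iy$ for real $y$ with $|y| < \tilde{T}$ yields $\mathcal{R}e f(iy) = 0$, which combined with $\mathcal{R}e(iy) = 0$ is equivalent to $\mathcal{R}e \phi(iy) = 0$. Thus, on the imaginary axis within $\disque{\tilde{T}}$, $\phi$ takes purely imaginary values.

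The key step is to expand $\phi$ along horizontal lines. For $t = x + iy \in \disque{r}$ with $r < \tilde{T}$, the complex Taylor expansion reads
\[ \phi(x + iy) = \phi(iy) + x \phi'(iy) + x^2 \int_0^1 (1 - u) \phi''(iy + u x) \, d u, \]
and taking real parts while using $\mathcal{R}e \phi(iy) = 0$ gives $\mathcal{R}e \phi(t) = x \mathcal{R}e \phi'(iy) + x^2 \rho(t)$, with $\rho$ bounded by a constant depending only on $M$ and $\tilde{T}$ via a standard Cauchy estimate for $\phi''$. The condition $\phi'(0) = 0$ provides the further improvement $|\phi'(iy)| \leq C |y|$ on $\disque{r}$, again by a Cauchy-type estimate applied to $\phi'$. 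Combining the two bounds yields
\[ \mathcal{R}e f(t) = x + \mathcal{R}e \phi(t) = x \bigl( 1 + O(|t|) \bigr), \]
uniformly for $t \in \disque{r}$.

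To finish, I would choose $T \in (0, r]$ small enough so that the $O(|t|)$-error above is bounded by $1/2$ on $\disque{T}$; then $\mathcal{R}e f(t) \geq x/2 > 0$ whenever $\mathcal{R}e t = x > 0$ and $t \in \disque{T}$. The only real obstacle is book-keeping the Cauchy estimates so that the constants, and hence $T$, depend only on $\tilde{T}$ and $M$, which is routine. Structurally, the argument requires both ingredients $\phi'(0) = 0$ (so that $\phi'(iy)$ decays linearly in $|y|$) and $\mathcal{R}e \phi(iy) = 0$ (so that the constant term of the expansion in $x$ vanishes); either hypothesis alone would leave an uncontrolled contribution of order $x$ or $|y|$ and would not deliver the factorization $\mathcal{R}e f(t) = x\bigl(1 + O(|t|)\bigr)$ on which the conclusion rests.
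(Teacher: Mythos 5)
Your proof is correct. Note that the paper itself gives no proof of Lemma \ref{emma30}, citing [Ha4] instead, so there is no in-paper argument to compare against; your Taylor-expansion argument is a self-contained and valid derivation. The reduction to $\phi = f - t$, the use of $\re\phi(iy)=0$ to kill the constant term in the horizontal expansion, the linear decay $|\phi'(iy)| \leq C|y|$ from $\phi'(0)=0$ together with Cauchy bounds on a disk $D_r$ with $r < \tilde T$ fixed (say $r = \tilde T/2$), and the resulting factorization $\re f(t) = x\bigl(1 + O(|t|)\bigr)$ on $D_r$ all hold as stated; the segment $[iy, x+iy]$ stays in $D_{|t|}\subset D_{\tilde T}$, so the integral remainder is well defined, and every constant depends only on $M$ and $\tilde T$ as required. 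Your closing remark on the necessity of both hypotheses is also accurate. One could alternatively observe that $\re\phi = 0$ on the imaginary axis forces the Schwarz-type symmetry $\phi(t) = -\overline{\phi(-\bar t)}$, which makes $\re\phi(x+iy)$ odd in $x$ and gives the factor $x$ without writing the expansion explicitly, but this is merely a repackaging of the same estimate and offers no real gain.
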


\begin{proposition}
  \label{emma34}Let $\mathcal{E}$ be the space of measures $\mu = \sum_{j =
  1}^n \delta_{s_j} \xi_j$ such that $n \geqslant 1, \xi_j \in
  \mathbbm{R^{\nu}}, s_j \in] 0, 1 [$. For complex $t$ with small modulus, we
  denote by $(., .)_t$ \ the following bilinear form on $\mathcal{E}$
  \[ \text{$(\mu_1, \mu_2)_t$} \assign \int_0^1 \int_0^1 d \mu_1 (s) \cdot
     \tilde{K}_t (s, s') d \mu_2 (s') . \]
  Notice that
  \[ \text{$(\mu_1, \mu_2)_0$} = \int_0^1 \int_0^1 s \wedge s' (1 - s \vee s')
     A (0) .d \mu_1 (s) \otimes d \mu_2 (s') . \]
  Then for complex $t$ with small modulus
  \begin{equation}
    \label{emma36} \forall \mu \in \mathcal{E}, | (\mu, \mu)_t | \leqslant 2
    (\mu, \mu)_0 .
  \end{equation}
\end{proposition}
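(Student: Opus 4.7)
The kernel $\tilde K_0(s,s')=(s\wedge s')(1-s\vee s')A(0)$ is the Brownian-bridge covariance on $[0,1]$ tensored with the real positive-definite matrix $A(0)$, so $(\cdot,\cdot)_0$ is a positive-definite inner product on $\mathcal{E}$ (for measures supported at distinct points, the Gram matrix of Brownian-bridge evaluations is strictly positive definite). By Lemma \ref{emma15.2}, $\tilde K_t(s,s')$ is jointly analytic in $(t,s,s')$ on $D_{\bar T}\times[0,1]^2$ and real-valued for $t\in i\mathbb{R}$, hence $g_\mu(t):=(\mu,\mu)_t$ is analytic near $0$, real on $i\mathbb{R}$, and nonnegative at $t=0$. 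The plan is to upgrade this pointwise continuity at $t=0$ to the uniform perturbation bound
\[
|(\mu,\mu)_t-(\mu,\mu)_0|\leq C\,|t|\,(\mu,\mu)_0,\qquad \mu\in\mathcal{E},\ |t|<T,
\]
from which the inequality $|(\mu,\mu)_t|\leq 2(\mu,\mu)_0$ follows immediately by taking $|t|\leq\min(T,1/C)$.

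To establish this relative bound, I would exploit the propagator interpretation of $\tilde K_t$: after the rescaling (\ref{lola14.5}), the proof of Proposition \ref{emma20} shows that $\tilde K_t$ is the Dirichlet Green's function on $[0,1]$ of the rescaled elliptic operator $-\partial_s^2+tE(ts)\partial_s+t^2F(ts)$, multiplied by $A(ts')$. In particular, $\tilde K_t(s,s')$ vanishes whenever $s$ or $s'$ lies in $\{0,1\}$ (since $\tilde q^{\flat}_{t\tau}(0)=0$ and the integration range in (\ref{lola18}) collapses at $s$ or $s'=1$), and the quotient $R_t:=(\tilde K_t-\tilde K_0)/t$ extends analytically across $t=0$ to a smooth, boundary-vanishing kernel on $[0,1]^2$, holomorphic in $t\in D_{\bar T}$. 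The Dirichlet Green's function structure of $\tilde K_0$ identifies the completion of $\mathcal{E}$ under $(\cdot,\cdot)_0$ with a negative Sobolev space on $(0,1)$; smooth boundary-vanishing kernels define bounded forms on this space, with norm controlled by a fixed $\mathcal{C}^2$-norm of the kernel. This yields the desired uniform estimate after shrinking $\bar T$.

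The main obstacle is the uniformity in $\mu\in\mathcal{E}$: analyticity of $g_\mu$ gives the pointwise bound $|g_\mu(t)|\leq 2g_\mu(0)$ on a disc whose radius may depend on $\mu$, and the extra work lies in converting this into an estimate with radius independent of $\mu$, which is where the operator-theoretic step enters. An alternative complex-analytic route invokes Lemma \ref{emma30} directly: for fixed $\mu$ with $g_\mu(0)>0$, the functions $f_\mu(t):=t\,g_\mu(0)/(2g_\mu(0)\mp g_\mu(t))$ satisfy $f_\mu(0)=0$, $f_\mu'(0)=1$, and map $i\mathbb{R}$ to $i\mathbb{R}$ (since $g_\mu$ is real on $i\mathbb{R}$), and Lemma \ref{emma30} then forces $\operatorname{Re}(2g_\mu(0)\mp g_\mu(t))>0$ on a small right half-disc; combined with the reflection symmetry $g_\mu(-\bar t)=\overline{g_\mu(t)}$, this gives $|g_\mu(t)|\leq 2g_\mu(0)$ there. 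However, applying Lemma \ref{emma30} uniformly in $\mu$ still requires a uniform preliminary bound $|f_\mu|\leq M$ on a fixed disc, which is essentially the functional-analytic content of the first approach.
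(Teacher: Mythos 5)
Your first route is in the same spirit as the paper's proof --- both identify the completion of $\mathcal{E}$ under $(\cdot,\cdot)_0$ with the negative Sobolev space $H^{-1}$ on $(0,1)$ (dual to $H^1_0$, generated by $S := -A^{-1}(0)\,d^2/ds^2$ with Dirichlet conditions) and seek a bound relative to that structure. However, there is a concrete gap in the intermediate claim that $R_t := (\tilde K_t - \tilde K_0)/t$ ``extends analytically across $t=0$ to a smooth, boundary-vanishing kernel on $[0,1]^2$'' whose $\mathcal{C}^2$-norm controls the form. This is false when $A$ is non-constant: the propagator equation (\ref{emma38}) gives $-\partial_s^2 \tilde K_t = A(ts)\,\delta_{s=s'} + (\text{lower order})$, so $\tilde K_t - \tilde K_0$ carries a jump $A(0)-A(ts)$ in $\partial_s^2$ across the diagonal, and $R_t$ retains a residual jump of size $\approx -s\dot A(0)$. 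Since the entire point of this paper is the non-autonomous case, $\dot A(0)\neq 0$ is the generic situation, and the kernel $R_t$ is only $C^1$ across $s=s'$. The ``$\mathcal{C}^2$-norm controls the $H^{-1}$-form'' step therefore does not apply as stated.

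The paper sidesteps this by never separating a remainder kernel: it writes $\mu = (S+T_t)u$ where $T_t$ collects the $t$-dependent perturbation of the differential operator, factors $S+T_t = S^{1/2}(1+S^{-1/2}T_tS^{-1/2})S^{1/2}$, observes $\|S^{-1/2}T_tS^{-1/2}\|_{L(H^0,H^0)}\to 0$ as $t\to 0$ (so $(1+S^{-1/2}T_tS^{-1/2})^{-1}$ exists with norm $\leqslant 2$), and concludes $|(\mu,\mu)_t|=|\langle S^{-1/2}\mu,(1+S^{-1/2}T_tS^{-1/2})^{-1}S^{-1/2}\mu\rangle| \leqslant 2\|S^{-1/2}\mu\|^2 = 2(\mu,\mu)_0$ by Cauchy-Schwarz. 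This is the cleaner form of the idea you are reaching for: the relative form-boundedness is proved directly at the operator level, where the diagonal singularity of the Green's function is irrelevant. If you want to salvage a kernel-level statement, you would need to argue that $S^{1/2}T_{R_t}S^{1/2}$ is bounded on $L^2$ (which is what form-boundedness relative to $(\cdot,\cdot)_0$ actually means), not that $R_t\in\mathcal{C}^2$; and showing this amounts to the same operator manipulation.

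Your alternative route via Lemma \ref{emma30} is, as you correctly note, circular --- the hypothesis $|f_\mu|\leq M$ on a fixed disc uniformly in $\mu$ is precisely the content of (\ref{emma36}). The paper does use Lemma \ref{emma30}, but only afterwards, in Proposition \ref{emma50}, where the $\mu$-uniform bound from the present proposition supplies the missing hypothesis.
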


\begin{remark}
  \label{emma37}The bilinear form $(., .)_0$ is symmetric positive definite
  (see [Ha4, Rem. 4.4]).
\end{remark}

\begin{proof}
  By Proposition \ref{emma20} and analytic continuation, $\tilde{K}_t$
  satisfies for complex $t$ with small modulus
  \begin{equation}
    \label{emma38} \left\{ \begin{array}{l}
      A^{- 1} (t s) \lp{2} - \frac{d^2}{d s^2} + t E (t s) \frac{d}{d s^{}} +
      t^2 F (t s) \rp{2} \tilde{K}_t (s, s') = \delta_{s = s'}\\
      \\
      \tilde{K}_t |_{s = 0} = \tilde{K}_t |_{s = 1} = 0
    \end{array} . \right.
  \end{equation}
  Let $(\xi_1, \ldots, \xi_n) \in \mathbbm{R}^{\nu n}$ and $(s_1, \ldots, s_n)
  \in] 0, 1 [^n$. The function $u$ defined by
  \[ u (s) = \sum_{j = 1}^n  \tilde{K}_t (s, s_j) \xi_j  \]
  is continuous and piecewise differentiable on $[0, 1]$. Let $\mu \assign
  \sum_{j = 1}^n \delta_{s_j} \xi_j$. By (\ref{emma38})
  \begin{equation}
    \label{emma40} \left\{ \begin{array}{l}
      A^{- 1} (t s) \lp{2} - \frac{d^2}{d s^2} + t E (t s) \frac{d}{d s^{}} +
      t^2 F (t s) \rp{2} u = \mu\\
      \\
      u (0) = u (1) = 0
    \end{array} \right.
  \end{equation}
  Let $H^0 \assign \text{$L^2 ([0, 1], \mathbbm{C}^{\nu})$}$. Let
  $\varepsilon_{k, l}$ be the coordinates of an orthonormal basis of
  $\mathbbm{R}^{\nu}$ diagonalizing the real symmetric matrix $A^{- 1} (0)$.
  For $n \geqslant 1$, $k, l \in \{1, \ldots, \nu\}$ and $s \in [0, 1]$, set
  $e_{n, k, l} (s) = \sqrt{2} \sin (n \pi s) \varepsilon_{k, l}$. $(e_{n,
  k})_{n, k}$ is an orthonormal basis of $H^0$ which diagonalizes the
  unbounded self-adjoint operator $S \assign - A^{- 1} (0) \frac{d^2}{d s^2}$
  (Dirichlet boundary conditions). Let
  \begin{eqnarray*}
    H_0^1 \assign &  & \lbc{2} f \in H^0 \ve{2} \frac{d f}{d s} \in L^2, f (0)
    = f (1) = 0 \rbc{2}\\
    = &  & \text{$\lbc{2} \sum_{n, k} f_{n, k} e_{n, k}$} \ve{2} \sum_{n, k}
    |n f_{n, k} |^2 < \infty \rbc{2}
  \end{eqnarray*}
  and
  \[ H^{- 1} \assign \lbc{2} \sum_{n, k} f_{n, k} e_{n, k} \ve{2} \sum_{n, k}
     \ve{2} \frac{f_{n, k}}{n} \sve{2}{2}{} < \infty \rbc{2} . \]
  For $(f, g) \in H^{- 1} \times H_0^1$ or $(f, g) \in H^0 \times H^0$, let
  \[ \left\langle f, g \right\rangle \assign \int_0^1 \bar{f} (s) \cdot g (s)
     d s = \sum_{n, k} \bar{f}_{n, k} g_{n, k} . \]
  The operator $S^{1 / 2}$ can be viewed as an isomorphism from $H_0^1$ to
  $H^0$ and from $H^0$ to $H^{- 1}$. Natural Hilbertian norms induced by
  $S^{\pm 1 / 2}$ can be defined on $H_0^1$ and $H^{- 1}$. Then $\mu \in H^{-
  1}$ and
  \[ \mu = (S + T_t) u, \]
  where
  \[ T_t = - \lp{1} A^{- 1} (t s) - A^{- 1} (0) \rp{1}  \frac{d^2}{d s^2} +
     A^{- 1} (t s) \lp{1} t E (t s) \frac{d}{d s^{}} + t^2 F (t s) \rp{1} . \]
  Since
  \[ S + T_t = S^{1 / 2} (1 + S^{- 1 / 2} T_t S^{- 1 / 2}) S^{1 / 2} \]
  and $\|S^{- 1 / 2} T_t S^{- 1 / 2} ||_{L (H^0, H^0)}$ goes to $0$ when $t$
  goes to 0, one has, for small complex $t$, that $1 + S^{- 1 / 2} T_t S^{- 1
  / 2}$ is invertible and
  \[ \|(1 + S^{- 1 / 2} T_t S^{- 1 / 2})^{- 1} \|_{L (H^0, H^0)} \leqslant 2.
  \]
  Hence
  \begin{eqnarray*}
    (\mu, \mu)_t = &  & \left. \langle \mu, u \right\rangle\\
    = &  & \lba{1} \mu, S^{- 1 / 2} (1 + S^{- 1 / 2} T_t S^{- 1 / 2})^{- 1}
    S^{- 1 / 2} \mu \rba{1}\\
    = &  & \lba{1} S^{- 1 / 2} \mu, (1 + S^{- 1 / 2} T_t S^{- 1 / 2})^{- 1}
    S^{- 1 / 2} \mu \rba{1}
  \end{eqnarray*}
  and by Cauchy-Schwarz inequality
  \[ | (\mu, \mu)_t | \leqslant 2| S^{- 1 / 2} \mu |_{H^0}^2 . \]
  But $|S^{- 1 / 2} \mu |_{H^0}^2 = \lba{1} \mu, S^{- 1} \mu \rba{1} =
  \text{$(\mu, \mu)_0$}$. This proves (\ref{emma36}).
\end{proof}

\begin{proposition}
  \label{emma50}Let $A$, $B$ and $C$ be as in Theorem \ref{lola20}. There
  exists $T_d \in] 0, \bar{T} [$ such that for every $\text{$n \geqslant 1, s
  = (s_1, \ldots, s_n) \in [0, 1]^n, (\xi_1, \ldots, \xi_n) \in
  \mathbbm{R}^{\nu n}$}, t \in \mathbbm{C}$ with the condition $0 < s_1 <
  \cdots < s_n < 1, |t| < T_d$,
  \begin{equation}
    \label{emma52} \mathcal{R} et \geqslant 0 \Rightarrow \mathcal{R} e \lp{1}
    t \tmmathbf{\tilde{K}}_t (s) \cdot \xi \otimes_n \xi \rp{1} \geqslant 0,
  \end{equation}
  \begin{equation}
    \label{emma54} \ve{1} \tmmathbf{\tilde{K}}_t (s) \cdot \xi \otimes_n \xi
    \ve{1} \leqslant 2 n|A (0) |_{\infty} \sum_{j = 1}^n \xi_j^2 .
  \end{equation}
\end{proposition}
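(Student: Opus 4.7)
The plan is to identify the quadratic form $\tmmathbf{\tilde K}_t(s)\cdot \xi \otimes_n \xi$ with the bilinear form $(\mu,\mu)_t$ of Proposition \ref{emma34} applied to the atomic measure $\mu := \sum_{j=1}^n \delta_{s_j}\xi_j$, which lies in $\mathcal{E}$ because $s_j\in\,]0,1[$ and $\xi_j\in\mathbb{R}^\nu$. Both (\ref{emma52}) and (\ref{emma54}) will then be derived from the key estimate (\ref{emma36}), combined with Lemma \ref{emma15.2} and Lemma \ref{emma30}.

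For (\ref{emma54}), I would first invoke (\ref{emma36}) to reduce matters to bounding
$(\mu,\mu)_0 = \sum_{j,k} s_{j\wedge k}(1-s_{j\vee k})\, A(0)\cdot \xi_j\otimes\xi_k$.
The Brownian-bridge Green matrix $G_{jk} := s_{j\wedge k}(1-s_{j\vee k})$ is positive semidefinite with $|G_{jk}|\le 1/4$ (since $G_{jj}\le 1/4$ and Cauchy--Schwarz gives $|G_{jk}|\le\sqrt{G_{jj}G_{kk}}$), and $|A(0)\cdot\xi_j\otimes\xi_k|\le|A(0)|_\infty|\xi_j||\xi_k|$ by definition of $|\cdot|_\infty$. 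A crude estimate then yields $(\mu,\mu)_0 \le (n|A(0)|_\infty/4)\sum_j\xi_j^2$, and (\ref{emma36}) gives (\ref{emma54}) with room to spare.

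For (\ref{emma52}), I may assume $\mu\ne 0$ (the zero case is trivial), so that $(\mu,\mu)_0>0$ by Remark \ref{emma37}. I would study the analytic function
\[
f(t) := \frac{t\,(\mu,\mu)_t}{(\mu,\mu)_0}
\]
on a disk $\disque{\tilde T}$ with $\tilde T < \bar T$ fixed once and for all. One checks $f(0)=0$ and $f'(0)=1$, while (\ref{emma36}) yields $|f(t)|\le 2|t|\le 2\tilde T =: M$ \emph{uniformly in $\mu$}. By Lemma \ref{emma15.2}, item \ref{emma15.2.2}, $\tilde K_t(s,s')\in\mathbb{R}$ for purely imaginary $t$ of small modulus, so $(\mu,\mu)_t\in\mathbb{R}$ and $f(t)$ is purely imaginary when $\mathcal{R}e t = 0$; this verifies the hypothesis (\ref{emma31}) of Lemma \ref{emma30}. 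That lemma then furnishes $T_d>0$, depending only on $\tilde T$ and $M$, such that $\mathcal{R}e t>0 \Rightarrow \mathcal{R}e f(t)>0$ on $\disque{T_d}$, and (\ref{emma52}) follows by continuity after multiplying by the positive scalar $(\mu,\mu)_0$.

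The main obstacle I anticipate is making sure that the radius $T_d$ does not degenerate as $(n,s_1,\ldots,s_n,\xi_1,\ldots,\xi_n)$ varies; without this uniformity the estimate would be useless for the proof of Theorem \ref{lola20}. This is precisely what (\ref{emma36}) is designed to provide: a $\mu$-independent sup bound on the \emph{normalized} function $f$, which is exactly the input Lemma \ref{emma30} needs in order to output a $\mu$-independent $T_d$.
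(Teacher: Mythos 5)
Your proof follows the paper's argument essentially line by line: same identification $\tmmathbf{\tilde K}_t(s)\cdot\xi\otimes_n\xi=(\mu,\mu)_t$, same bound on $(\mu,\mu)_0$ followed by (\ref{emma36}) for (\ref{emma54}), and same normalized function $f(t)=t(\mu,\mu)_t/(\mu,\mu)_0$ fed into Lemma \ref{emma30} via Lemma \ref{emma15.2} for (\ref{emma52}). You even correct a small typo in the paper (which states $f(0)=1$ rather than $f(0)=0$) and make explicit the uniformity of $T_d$ in $(n,s,\xi)$, which the paper leaves implicit; the extra $1/4$ from $|G_{jk}|\leqslant 1/4$ just gives a sharper constant than needed.
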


\begin{proof}
  Let $\mu \assign \sum_{j = 1}^n \delta_{s_j} \xi_j$. Then
  $\tmmathbf{\tilde{K}}_t (s) \cdot \xi \otimes_n \xi = (\mu, \mu)_t$ and
  \[ (\mu, \mu)_0 = \sum_{j, k = 1}^n s_{j \wedge k} (1 - s_{j \vee k}) \xi_j
     \cdot A (0) \xi_k . \]
  Then $(\mu, \mu)_0 \leqslant n|A (0) |_{\infty} \sum_{j = 1}^n \xi_j^2$.
  Hence Proposition \ref{emma34} implies (\ref{emma54}) if $T_d$ is small
  enough.
  
  Let us choose arbitrary vectors $\xi_1, \ldots, \xi_n$ such that $(\xi_1,
  \ldots, \xi_n)$ does not vanish. By Remark \ref{emma37}, $(\mu, \mu)_0 \neq
  0$. Let $f$ be the function defined by
  \[ f (t) = t \frac{(\mu, \mu)_t}{(\mu, \mu)_0} = t
     \frac{\tmmathbf{\tilde{K}}_t (s) \cdot \xi \otimes_n \xi}{(\mu, \mu)_0} .
  \]
  We claim that the function $f$ satisfies (\ref{emma31}). It suffices to
  check that $g (t) \assign \tmmathbf{\tilde{K}}_t (s) \cdot \xi \otimes_n
  \xi$ satisfies (\ref{lola22}). This holds by Lemma \ref{emma15.2}. By
  (\ref{emma36}), $f$ is bounded for complex $t$ with small modulus. Obviously
  $f (0) = 1$ and $f' (0) = 1$. By Lemma \ref{emma30}, there exists $T_d > 0$
  such that $\re f (t) > 0$ for $t \in \demidisque{T_d}$. Since $(\mu, \mu)_0
  \in] 0, + \infty [$, (\ref{emma52}) holds for $t \in \disque{T_d}$.
\end{proof}

\begin{remark}
  \label{emma56}The reality assumption (\ref{lola22}) is crucial for
  establishing (\ref{emma31}). What happens when (\ref{lola22}) does not hold?
  Then the statement of Lemma \ref{emma30} can be replaced by the following
  one. Let $\tilde{T} > 0$, $M > 0$ and $\varepsilon \in] 0, \pi / 2 [$. There
  exists $T_{\varepsilon} > 0$ such that every analytic function $f$ on
  $\disque{\tilde{T}}$, with $f (0) = 0$, $f' (0) = 1$, $\sup_{t \in
  \text{$\disque{\tilde{T}}$}} \text{$|f (t) | \leqslant M$}$, satisfies
  \[ \left\{ \begin{array}{l}
       t \in D_{T_{\varepsilon}}\\
       \arg t \in] - \pi / 2 + \varepsilon, \pi / 2 - \varepsilon [
     \end{array} \Rightarrow \re f (t) > 0. \right. \]
  Therefore (\ref{emma52}) can be replaced by
  \[ \left\{ \begin{array}{l}
       t \in D_{T_{\varepsilon}}\\
       \arg t \in] - \pi / 2 + \varepsilon, \pi / 2 - \varepsilon [
     \end{array} \Rightarrow \mathcal{R} e \lp{1} t \tilde{K}_t (s) \cdot \xi
     \otimes_n \xi \rp{1} \geqslant 0. \right. \]
  Then, even if Assumption (\ref{lola22}) is removed, the deformation formula
  will remain valid for $t \in D_{T_{\varepsilon}} -\{0\}, \arg t \in] - \pi /
  2 + \varepsilon, \pi / 2 - \varepsilon [$. One can expect to recover the
  Schr\"odinger kernel if the function $c$ is chosen as in [Ha7, Proposition
  4.5 (case $2$)].
\end{remark}

\subsection{Proof of Theorem \ref{lola20}}

The following lemma will be useful.

\begin{lemma}
  \label{emma68}Let $m \geqslant 0$ and $\Omega_1$, $\Omega_2$ be some open
  subsets of $\mathbbm{C}$ such that $\Omega_1 \subset \subset
  \Omega_2${\footnote{i.e. there exists $\rho > 0$ such that $\Omega_1 +
  \disque{\rho} \subset \Omega_2$.}}. There exists $C_{m, \Omega_1, \Omega_2}
  > 0$ satisfying the following property: for every analytic bounded
  matrix-valued function $\theta$ on $\Omega_2$ and every analytic bounded
  $\mathbbm{C}$-valued function $\varphi$ on $\Omega_2$ one has
  \begin{equation}
    \partial_t^m (\theta e^{\varphi}) = \alpha_m e^{\varphi}
  \end{equation}
  where $\alpha_m$ denotes an analytic matrix-valued function on $\Omega_2$
  such that
  \[ \sup_{\Omega_1} | \alpha_m | \leqslant C_{m, \Omega_1, \Omega_2} \times
     \sup_{\Omega_2} | \theta | \times \lp{1} 1 + (\sup_{\Omega_2} | \varphi
     |)^m \rp{1} . \]
\end{lemma}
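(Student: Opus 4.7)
The plan is to expand $\partial_t^m(\theta e^\varphi)$ by the Leibniz rule and Faà di Bruno's formula, factor out $e^\varphi$, and then bound the remaining derivatives of $\theta$ and $\varphi$ by Cauchy estimates on $\Omega_1$.

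First, I would write
\[
\partial_t^m(\theta e^\varphi) = \sum_{k=0}^m \binom{m}{k} \, \partial_t^{m-k}\theta \cdot \partial_t^k e^\varphi.
\]
By Faà di Bruno's formula, $\partial_t^k e^\varphi = e^\varphi \cdot B_k\lp{0}\varphi^{(1)},\ldots,\varphi^{(k)}\rp{0}$, where $B_k$ is the $k$-th complete Bell polynomial. The crucial structural fact is that $B_k$ is a sum of monomials of the form $c_{\mathbf{j}} \, \varphi^{(j_1)}\cdots\varphi^{(j_r)}$ with $j_1+\cdots+j_r=k$ and $1\leqslant r\leqslant k$; in particular, each monomial has a number of factors $r$ bounded by $k\leqslant m$. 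Factoring out $e^\varphi$ yields
\[
\alpha_m = \sum_{k=0}^m \binom{m}{k}\,\partial_t^{m-k}\theta \cdot B_k\lp{0}\varphi^{(1)},\ldots,\varphi^{(k)}\rp{0},
\]
with the $k=0$ term equal to $\partial_t^m\theta$ (using $B_0=1$), which is analytic on $\Omega_2$; hence $\alpha_m$ is analytic on $\Omega_2$ as required.

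Next, I would use the assumption $\Omega_1\subset\subset\Omega_2$: there exists $\rho>0$ with $\Omega_1+\disque{\rho}\subset\Omega_2$. Cauchy's integral formula then gives, for every $z\in\Omega_1$ and every $j\geqslant 0$,
\[
|\partial_t^j\theta(z)|\leqslant \frac{j!}{\rho^j}\sup_{\Omega_2}|\theta|, \qquad |\partial_t^j\varphi(z)|\leqslant \frac{j!}{\rho^j}\sup_{\Omega_2}|\varphi|.
\]
Applying the second inequality to each factor in a monomial of $B_k$ with $r$ factors gives a bound $\leqslant C'_{k,\rho}\lp{1}\sup_{\Omega_2}|\varphi|\rp{1}^r$, and summing over the (finitely many) monomials of $B_k$ yields
\[
\sup_{\Omega_1}\ve{1} B_k\lp{0}\varphi^{(1)},\ldots,\varphi^{(k)}\rp{0}\ve{1} \leqslant C''_{k,\rho}\lp{2}1+\lp{1}\sup_{\Omega_2}|\varphi|\rp{1}^k\rp{2},
\]
since the term of degree $0$ appears only when $k=0$, while for $k\geqslant 1$ the degrees range in $1,\ldots,k$, each of which can be crudely dominated by $1+(\sup|\varphi|)^k$.

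Finally, combining with the Cauchy estimate on $\partial_t^{m-k}\theta$ and summing over $k=0,\ldots,m$ gives
\[
\sup_{\Omega_1}|\alpha_m| \leqslant \sum_{k=0}^m \binom{m}{k} \frac{(m-k)!}{\rho^{m-k}} C''_{k,\rho}\,\sup_{\Omega_2}|\theta|\,\lp{2}1+\lp{1}\sup_{\Omega_2}|\varphi|\rp{1}^k\rp{2},
\]
which is bounded by $C_{m,\Omega_1,\Omega_2}\sup_{\Omega_2}|\theta|\lp{1}1+(\sup_{\Omega_2}|\varphi|)^m\rp{1}$, as claimed. There is no real obstacle here; the only point requiring care is the combinatorial structure of the Bell polynomial, which controls the exponent in the final bound and ensures that the degree in $\varphi$ never exceeds $m$.
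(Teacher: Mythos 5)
Your proof is correct, and it rests on the same key tool the paper invokes, namely Cauchy's estimates on the compactly contained subdomain $\Omega_1+\disqueb{\rho}\subset\Omega_2$. The paper's one-line proof sketches an induction on $m$ (via the recursion $\alpha_{m+1}=\partial_t\alpha_m+\alpha_m\varphi'$); your Leibniz--Fa\`a di Bruno expansion is simply the unrolled, closed form of that induction, and it has the merit of making the degree bound in $\sup_{\Omega_2}|\varphi|$ transparent through the structure of the Bell polynomials.
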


\begin{proof}
  The lemma can be proved with the help of Cauchy's formula by induction on
  $m$.
\end{proof}

Let us prove Theorem \ref{lola20}. We choose $T_c = \frac{1}{2} \min (1, T_a,
T_b, T_d, T_e)$ (see Lemma \ref{emma9.1}, (\ref{lola21}), Proposition
\ref{emma50} and Lemma \ref{emma15.2}).

\bigskip

\tmtextbf{-1-} Let us check that $v_n$ given by (\ref{lola32}) and
$p^{\tmop{conj}} = \mathbbm{1} + \sum_{n \geqslant 1} v_n$ are well defined
for $t \in \demidisque{T_c}$. \ For $t \in \demidisque{2 T_c}$, let
\[ \varphi_n (t) \assign \tilde{q}^{\natural}_t (s_{}) \cdot \xi + i t
   \tmmathbf{\tilde{K}}_t (s) \cdot \xi \otimes_n \xi, \]
\[ F_n \assign t^n e^{i \varphi_n (t)} f (s_n t, \xi_n) \cdots f (s_1 t,
   \xi_1) . \]
Let $R > 0$ and let $(x, y) \in \mathbbm{C}^{2 \nu}$ such that $|x| + |y| <
R$. By Lemma \ref{emma9.1} and by (\ref{emma52}), $|i \tilde{q}^{\natural}_t
(s_{}) \cdot \xi | \leqslant 2 R (| \xi_1 | + \cdots + | \xi_n |)$ and $\re (t
\tmmathbf{\tilde{K}}_t (s) \cdot \xi \otimes_n \xi) \geqslant 0$. For
$\xi^{\ast} \in \mathbbm{R}^{\nu}$ and $\xi = (\xi_{_1}, \ldots, \xi_n) \in
\mathbbm{R}^{\nu n}$, let us denote
\[ \mathfrak{f}(\xi^{\ast}) \assign \sup_{|t| < T_b} |f (t, \xi^{\ast}) |,
   \mathfrak{f}^{\otimes} (\xi) \assign \mathfrak{f}(\xi_n) \cdots
   \mathfrak{f}(\xi_1), \]
\[ d^{\nu n} \mu_{\ast}^{\otimes} (\xi) \assign d \mu_{\ast} (\xi_n) \cdots d
   \mu_{\ast} (\xi_1), \]
\[ G_n \assign 2^n T_c^n \exp \lp{1} 2 R (| \xi_1 | + \cdots + | \xi_n |)
   \rp{1} \mathfrak{f}^{\otimes} (\xi) . \]
Then $|F_n | \leqslant G_n$. Let
\[ A \assign 2 \int_{\mathbbm{R}^{\nu}} \exp (2 R| \xi |)\mathfrak{f}(\xi) d
   \mu_{\ast} (\xi) . \]
Then
\[ \int_{\mathbbm{R}^{\nu n}} G_n d^{\nu n} \mu_{\ast}^{\otimes} (\xi)
   \leqslant (A T_c)^n \]
and
\[ \sum_{n \geqslant 1} \int_{0 < s_1 < \cdots < s_n < 1}
   \int_{\mathbbm{R}^{\nu n}} G_n d^{\nu n} \mu_{\ast}^{\otimes} (\xi) d^n s <
   \infty . \]
Hence $v_n$ and $p^{\tmop{conj}}$ are well defined on $\demidisque{2 T_c}
\times \mathbbm{C}^{2 \nu}$ since $R$ is arbitrary (let us remark that the
expressions (\ref{lola24}) and (\ref{lola32}) of $v_n$ are clearly
equivalent). By the dominated convergence theorem, $p^{\tmop{conj}} \in
\mathcal{A} ( \demidisque{T_c} \times \mathbbm{C}^{2 \nu})$.

\bigskip

\tmtextbf{-2-} Let us check that $p^{\tmop{conj}}$ is well defined for $t \in
\bar{D}^+_{T_c}$ and that $p^{\tmop{conj}} \in \mathcal{C}^{\infty} \lp{1}
\bar{D}^+_{T_c}, \mathcal{A}(\mathbbm{C}^{2 \nu}) \rp{1}$. \ Let $R > 0$. By
(\ref{emma10}) and (\ref{emma54}), there exists $M > 0$ such that for $n
\geqslant 1$, $0 < s_1 < \cdots < s_n < 1$, $(\xi_1, \ldots, \xi_n) \in
\mathbbm{R}^{\nu n}$, $t \in \disque{2 T_c}$, $(x, y) \in \mathbbm{C}^{2 \nu}$
and $|x| + |y| < R$,
\begin{equation}
  \label{isabella10} | \varphi_n (t) | \leqslant M \lp{1} 1 + n (| \xi_1 | +
  \cdots | \xi_n |)^2 \rp{1} .
\end{equation}
We want to use the dominated convergence theorem. For $m \geqslant 0$, let
\[ F_{n, m} \assign \partial_t^m \lp{1} t^n f (s_n t, \xi_n) \cdots f (s_1 t,
   \xi_1) e^{i \varphi_n (t)} \rp{1} . \]
By Lemma \ref{emma68} and (\ref{isabella10}), there exists $K_m > 0$ such that
\[ | F_{n, m} | \leqslant K_m^{n + 1} n^m \lp{1} 1 + (| \xi_1 | + \cdots + |
   \xi_n |)^{2 m} \rp{1} \exp \lp{1} 2 R (| \xi_1 | + \cdots + | \xi_n |)
   \rp{1} \mathfrak{f}^{\otimes} (\xi), \]
for $t \in \disque{T_c}$, $\re t \geqslant 0$. Then the inequality
\[ \frac{1}{(2 m) !} \lp{1} 1 + (| \xi_1 | + \cdots + | \xi_n |)^{2 m} \rp{1}
   \leqslant \exp (| \xi_1 | + \cdots + | \xi_n |) \]
yields $|F_{n, m} | \leqslant G_{n, m}$ where
\[ G_{n, m} \assign (2 m) !K_m^{n + 1} n^m \exp \lp{1} (1 + 2 R) (| \xi_1 | +
   \cdots + | \xi_n |) \rp{1} \mathfrak{f}^{\otimes} (\xi) . \]
Let
\[ A = \int_{\mathbbm{R}^{\nu}} \exp \lp{1} (1 + 2 R) | \xi | \rp{1}
   \mathfrak{f}(\xi) d \mu_{\ast} (\xi) . \]
Then
\[ \int_{\mathbbm{R}^{\nu n}} G_{n, m} d^{\nu n} \mu_{\ast}^{\otimes} (\xi)
   \leqslant (2 m) !n^m K_m^{n + 1} A^n \]
and
\[ \sum_{n \geqslant 1} \int_{0 < s_1 < \cdots < s_n < 1}
   \int_{\mathbbm{R}^{\nu n}} G_{n, m} d^{\nu n} \mu_{\ast}^{\otimes} (\xi)
   d^n s < \infty . \]
Since $R$ and $m$ are arbitrary, the dominated convergence theorem proves that
$p^{\tmop{conj}} \in \text{$\mathcal{C}^{\infty} \lp{1} \bar{D}^+_{T_c},
\mathcal{A}(\mathbbm{C}^{2 \nu}) \rp{1}$}$.

\bigskip

\tmtextbf{-3-} Let us check that $p^{\tmop{conj}} \in \mathcal{C}^{\infty}_{b,
1} \lp{1} i] - T_c, T_c [\times \mathbbm{R}^{2 \nu} \rp{1}$. Let $\alpha \in
\mathbbm{N}, \beta, \gamma \in \mathbbm{N}^{\nu}, x, y \in \mathbbm{R}^{\nu}$
and $\tilde{t} \in] - 2 T_c, 2 T_c [$. Let
\[ F_n \assign \partial^{\alpha}_{\tilde{t}} \partial^{\beta}_x
   \partial^{\gamma}_y \lp{1} (i \tilde{t} \srp{0}{n}{} e^{i \varphi_n (i
   \text{$\tilde{t}$})} f (i \tilde{t} s_n, \xi_n) \cdots f (i \tilde{t} s_1,
   \xi_1) \rp{1} . \]
Let $(e_1, \ldots, e_{\nu})$ be the standard basis of $\mathbbm{R}^{\nu}$. For
$\delta = 1, \ldots, \nu$, let us denote
\[ \varpi_{\delta, \flat} (t) \assign ( \tilde{q}^{\flat}_t (s_1) e_{\delta})
   \cdot \xi_1 + \cdots + ( \tilde{q}^{\flat}_t (s_n) e_{\delta}) \cdot \xi_n,
\]
\[ \varpi_{\delta, \sharp} (t) \assign ( \tilde{q}^{\sharp}_t (s_1)
   e_{\delta}) \cdot \xi_1 + \cdots + ( \tilde{q}^{\sharp}_t (s_n) e_{\delta})
   \cdot \xi_n . \]
Then
\[ F_n = i^{\alpha} \partial^{\alpha}_t \lp{1} \theta_n (t) e^{i \varphi_n
   (t)} \rp{1} \sve{1}{}{t = i \text{$\tilde{t}$}} \]
where
\[ \theta_n (t) \assign t^n \varpi^{\beta_1}_{1, \flat} (t) \cdots
   \varpi^{\beta_{\nu}}_{\nu, \flat} (t) \varpi^{\gamma_1}_{1, \sharp} (t)
   \cdots \varpi^{\gamma_{\nu}}_{\nu, \sharp} (t) f (t s_1, \xi_1) \cdots f (t
   s_n, \xi_n) . \]
Then there exists $M_1 > 0$ such that, for $t \in D_{2 T_c}$,
\[ | \theta_n (t) | \leqslant M_1 | \xi |_1^{| \beta | + | \gamma |}
   \mathfrak{f}^{\otimes} (\xi) . \]
By (\ref{emma54}) there exists $M_2 > 0$, such that, for $t \in D_{2 T_c}$,
\[ | \varphi_n (t) | \leqslant M_2 \lp{1} (|x| + |y|) | \xi |_1 + n| \xi |_1^2
   \rp{1} . \]
Then, by Lemma \ref{emma68}, there exists $C > 0$ such that for $\tilde{t}
\in] - T_c, T_c [$, $n \geqslant 1$, $\xi \in \mathbbm{R}^{\nu n}$ and $x, y
\in \mathbbm{R}^{\nu}$
\[ | F_n | \leqslant C \lp{2} 1 + \lp{1} (|x| + |y|) | \xi |_1 + n| \xi |_1^2
   \srp{1}{\alpha}{} \rp{2} | \xi |_1^{| \beta | + | \gamma |}
   \mathfrak{f}^{\otimes} (\xi) . \]
Here we also use that, by assertions \ref{emma15.2.1} and \ref{emma15.2.2} of
Lemma \ref{emma15.2}, $\varphi_n (i \text{$\tilde{t}$}) \in \mathbbm{R}$.
Then, by binomial formula, there exits $C' > 0$ such that
\begin{eqnarray*}
  | F_n | \leqslant &  & C' \times \lp{2} 1 + \sum_{\alpha_1 + \alpha_2 =
  \alpha} (|x| + |y|)^{\alpha_1} | \xi |_1^{\alpha_1 + 2 \alpha_2}
  n^{\alpha_2} \rp{2} \times | \xi |_1^{| \beta | + | \gamma |}
  \mathfrak{f}^{\otimes} (\xi)\\
  \leqslant &  & Q (x, y) n^{\alpha} e^{| \xi |_1} \mathfrak{f}^{\otimes}
  (\xi)
\end{eqnarray*}
where
\[ Q (x, y) \assign C' \lp{2} 1 + \sum_{\alpha_1 + \alpha_2 = \alpha} \lp{1}
   \alpha_1 + 2 \alpha_2 + | \beta | + | \gamma | \rp{1} ! (|x| +
   |y|)^{\alpha_1} \rp{2} . \]
Let
\[ A = \int_{\mathbbm{R}^{\nu}} e^{| \xi |} \mathfrak{f}(\xi) d \mu_{\ast}
   (\xi) . \]
Then
\[ \int_{\mathbbm{R}^{\nu n}} |F_n |d^{\nu n} \mu_{\ast}^{\otimes} (\xi) d^n s
   \leqslant Q (x, y) n^{\alpha} A^n . \]
Therefore, for $t \in i] - T_c, T_c [$ and $x, y \in \mathbbm{R}^{\nu}$,
\[ | \partial^{\alpha}_t \partial^{\beta}_x \partial^{\gamma}_y u| \leqslant 1
   + Q (x, y) \sum_{n \geqslant 1} n^{\alpha} \frac{A^n}{n!} . \]
This proves that $p^{\tmop{conj}} \in \mathcal{C}^{\infty}_{b, 1} (i] - T_c,
T_c [\times \mathbbm{R}^{2 \nu})$.

\bigskip

\tmtextbf{-4-} Let us verify that the function $p^0 p^{\tmop{conj}}$, with
$p^{\tmop{conj}}$ given by (\ref{lola23}), is a solution of (\ref{ella6}). By
continuity and analyticity arguments, it suffices to check (\ref{ella6}) for
small positive number $t$. Let
\[ D_x \assign \partial_x + B (t) x. \]
Then, if $v = v (t, x)$ is a regular function with respect to its arguments,
\[ A (t) \cdot D_x^2 (p^0 v) = \lp{1} A (t) \cdot D_x^2 p^0 \rp{1} v + 2 A (t)
   \cdot D_x p^0 \otimes \partial_x v + p^0 A (t) . \partial_x^2 v. \]
A solution $u$ of (\ref{ella6}) is then given, if we use the relation $u = p^0
v$, by a solution $v$ of the conjugate equation
\[ \left\{ \begin{array}{l}
     \lp{1} \text{$\partial_t - \frac{2}{p^0} A (t) \cdot D_x p^0 \otimes
     \partial_x \rp{1} v = A (t) \cdot \partial_x^2 v$} + c (t, x) v \text{ \
     \ \ \ (} t \neq 0 \text{)}\\
     \\
     v_{} |_{t = 0^+} =_{} \mathbbm{1}
   \end{array} . \right. \]
Let $v_0 = \mathbbm{1} .$ By (\ref{emma14}), it suffices to verify that, for
$n \geqslant 1$, $v_n$ given by (\ref{lola24}) satisfies
\begin{equation}
  \label{emma70} \left\{ \begin{array}{l}
    \lp{1} \partial_t + \dot{q}^{\natural}_t (t) \cdot \partial_x \rp{1} v_n
    \text{$= A (t) \cdot \partial^2_x v_n$} + c (t, x) v_{n - 1}\\
    \\
    v_n |_{t = 0^+} =_{} 0
  \end{array} \right.,
\end{equation}
for small positive number $t$ and $n \geqslant 1$. By (\ref{lola24})
\[ v_n = \int_{0 < s_1 < \cdots < s_n < t} F_n d^n s, \]
where
\[ F_n = \lb{2} \exp \lp{1} \tmmathbf{K}_t (s) \cdot \partial_z \otimes_n
   \partial_z \rp{1} c (s_n, z_n) \cdots c (s_1, z_1) \rb{2}
   \ve{2}_{\tmscript{\begin{array}{l}
     z_1 = q^{\natural}_t (s_1)\\
     \ldots\\
     z_n = q^{\natural}_t (s_n)
   \end{array}}} . \]
Here
\[ \tmmathbf{K}_t (s) \cdot \partial_z \otimes_n \partial_z \assign \sum_{j,
   k = 1}^n \partial_{z_j} \cdot K_t (s_j, s_k) \partial_{z_k} \]
where $K_t$ is defined by (\ref{lola16}). Then
\[ \lp{1} \partial_t + \dot{q}^{\natural}_t (t) \cdot \partial_x \rp{1} v_n =
   (\tmop{boundary}) + (\tmop{interior}) \]
where
\[ (\tmop{boundary}) = \int_{0 < s_1 < \cdots < s_{n - 1} < t} F_n |_{s_n = t}
   d^{n - 1} s, \]
\[ (\tmop{interior}) = \int_{0 < s_1 < \cdots < s_n < t} \lp{1}
   \text{$\partial_t + \dot{q}^{\natural}_t (t) \cdot \partial_x \rp{1} F_n$}
   d^n s. \]
Since
\[ \tmmathbf{K}_t (s_1, \ldots, s_{n - 1}, t) \cdot \partial_z \otimes_n
   \partial_z = \tmmathbf{K}_t (s_1, \ldots, s_{n - 1}) \cdot \partial_z
   \otimes_{n - 1} \partial_z, \]
one gets
\[ \text{$(\tmop{boundary}) = c (t, x) v_{n - 1}$.} \]
Now we claim that $(\tmop{interior}) = A (t) \cdot \partial^2_x v_n$. By
(\ref{emma15}), if $\varphi (z_1, \ldots, z_n$) is an arbitrary differentiable
function of $(z_1, \ldots, z_n) \in \mathbbm{C}^{\nu n}$,
\[ \lp{1} \partial_t \text{$+ \dot{q}^{\natural}_t (t) \cdot \partial_x
   \rp{1}$} \text{$[\varphi (z_1, \ldots, z_n$)]$|_{\tmscript{\begin{array}{l}
     z_1 = q^{\natural}_t (s_1)\\
     \ldots\\
     z_n = q^{\natural}_t (s_n)
   \end{array}}}$} = 0. \]
Then
\[ (\tmop{interior}) = \int_{0 < s_1 < \cdots < s_n < t} G_n d^n s, \]
where
\[ G_n = \lb{2} \partial_t \lp{1} \tmmathbf{K}_t (s) \cdot \partial_z
   \otimes_n \partial_z \rp{1} \exp \lp{1} \tmmathbf{K}_t (s) \cdot \partial_z
   \otimes_n \partial_z \rp{1} c (s_n, z_n) \cdots c (s_1, z_1) \rb{2}
   \ve{2}_{\tmscript{\begin{array}{l}
     z_1 = q^{\natural}_t (s_1)\\
     \ldots\\
     z_n = q^{\natural}_t (s_n)
   \end{array}}} . \]
On the other hand,
\[ A (t) \cdot \partial^2_x v_n = \int_{0 < s_1 < \cdots < s_n < t} H_n d^n s
\]
where
\[ H_n = A (t) \cdot \partial^2_x \lb{2} \exp \lp{1} \tmmathbf{K}_t (s) \cdot
   \partial_z \otimes_n \partial_z \rp{1} c (s_n, z_n) \cdots c (s_1, z_1)
   \rb{2} \ve{2}_{\tmscript{\begin{array}{l}
     z_1 = q^{\natural}_t (s_1)\\
     \ldots\\
     z_n = q^{\natural}_t (s_n)
   \end{array}}} . \]
For $\bar{s} \in [0, t]$,  $q^{\natural}_t ( \bar{s}) = q^{\flat}_t ( \bar{s})
x + q^{\sharp}_t ( \bar{s}) y$. Then
\[ H_n = \lb{3} \sum_{j, k = 1}^n A (t) \cdot \lp{1} \text{}^{\mathfrak{t}}
   q^{\flat}_t (s_j) \partial_{z_j} \otimes \text{}^{\mathfrak{t}} q^{\flat}_t
   (s_k) \partial_{z_k} \rp{1} \exp \lp{1} \tmmathbf{K}_t (s) \cdot \partial_z
   \otimes_n \partial_z \rp{1} \times \]
\[ c (s_n, z_n) \cdots c (s_1, z_1) \rb{3} \ve{3}_{\tmscript{\begin{array}{l}
     z_1 = q^{\natural}_t (s_1)\\
     \ldots\\
     z_n = q^{\natural}_t (s_n)
   \end{array}}} . \]

By (\ref{lola16}), $\partial_t K_t (s, s') = q^{\flat}_t (s) A (t)
\text{}^{\mathfrak{t}} q^{\flat}_t (s')$ and then
\[ \partial_t \lp{1} \tmmathbf{K}_t (s) \cdot \partial_z \otimes_n \partial_z
   \rp{1} = \sum_{j, k = 1}^n \partial_{z_j} \cdot q^{\flat}_t (s_j) A (t)
   \text{}^{\mathfrak{t}} q^{\flat}_t (s_k) \partial_{z_k} . \]
Therefore $G_n = H_n$ and
\[ (\tmop{interior}) = A (t) \cdot \partial^2_x v_n . \]
Then (\ref{emma70}) holds and $p^0 p^{\tmop{conj}}$ satisfies (\ref{ella6}).

\bigskip

\ \ \ \ \ \ \ \ \ \ \ \ \ \ \ \ \ \ \ \ \ \ \ \ \ \ \ \ \ \ \ \ \ \ \ \ \ \ \
\ REFERENCES

\medskip

[A-H] S. A. Albeverio, R. J.Hoegh-Krohn, \tmtextit{\tmtextup{Mathematical
Theory of Feynman path integrals}}, Lecture Notes in Mathematics
\tmtextbf{523} (1976).

[B-B] R. Balian and C. Bloch, \tmtextup{Solutions of the Schr\"odinger
equation in terms of classical paths}, Ann. of Phys. \tmtextbf{85} (1974),
514-545.

[Ge-Ya] I.M. Gel'fand, A.M. Yaglom, Integration in functional spaces and its
applications in quantum physics, Journal of Mathematical Physics, 1-1 (1960),
48-69.

[Ha4] T. Harg\'e, Borel summation of the small time expansion of the heat
kernel. The scalar potential case (2013).

[Ha7] Some remarks on the complex heat kernel on $\mathbbm{C}^{\nu}$ in the
scalar potential case (2013).

[Ha5] T. Harg\'e, Borel summation of the heat kernel with a vector potential
case (2013).

[It] K. Ito, \tmtextup{Generalized uniform complex measures in the Hilbertian
metric space with their applications to the Feynman integral}, Fifth berkeley
Symp. on Math. Statist. and Prob. \tmtextbf{2} (1967), 145--161.

[On] E. Onofri, On the high-temperature expansion of the density matrix,
American Journal Physics, 46-4 (1978), 379-382.

[Ru] W. Rudin, Real and complex analysis, section 6.

\bigskip

D\'epartement de Math\'ematiques, Laboratoire AGM (CNRS), Universit\'e de
Cergy-Pontoise, 95000 Cergy-Pontoise, France.

\end{document}